
\documentclass[final]{siamltex}
\usepackage{amsmath, amssymb, graphicx}
%


\newcommand{\Eq}[1]{(\ref{eq:#1})}

\newcommand{\Def}[1]{Def.~\ref{def:#1}}
\newcommand{\Sec}[1]{\S \ref{sec:#1}}
\newcommand{\Fig}[1]{Fig.~\ref{fig:#1}}

\newcommand{\App}[1]{App.~\ref{app:#1}}

\newcommand{\InsertFig}[4]
{\begin{figure}[h!tb]
       \centerline{
         \includegraphics[width=#4]{./#1}
       }
       \caption{{\footnotesize  #2}
       \label{fig:#3}}
\end{figure}}

\newcommand{\InsertFigTwo}[5] {
\begin{figure}[h!tb]
       \centerline{
         \includegraphics[width=#5]{./#1}
         \hskip 0.05\linewidth
         \includegraphics[width=#5]{./#2}
       }
       \caption{{\footnotesize  #3}
       \label{fig:#4}}
\end{figure}}

\newcommand{\InsertFigFour}[7] {
\begin{figure}[h!tb]
       \centerline{
\renewcommand{\arraystretch}{0.01}
        \begin{tabular}{cc}
        \includegraphics[width=#7]{./#1}  &  \includegraphics[width=#7]{./#2} \bigskip \\
        \includegraphics[width=#7]{./#3}  &  \includegraphics[width=#7]{./#4}
        \end{tabular}
       }
       \caption{{\footnotesize  #5}
       \label{fig:#6}}
\end{figure}}

\newcommand{\bR}{{\mathbb{ R}}}

\newcommand{\bS}{{\mathbb{ S}}}

\newcommand{\cH}{{\cal H}}
\newcommand{\cI}{{\cal I}}
\newcommand{\cL}{{\cal L}}

\newcommand{\cP}{{\cal P}}
\newcommand{\cQ}{{\cal Q}}

\newcommand{\cS}{{\cal S}}
\newcommand{\cU}{{\cal U}}
\newcommand{\cV}{{\cal V}}

\newcommand{\eps}{\varepsilon}
\newcommand{\vphi}{\varphi}

\newcommand{\Wu}{W^{\mathrm{u}}}
\newcommand{\Ws}{W^{\mathrm{s}}}



\newcommand{\beq}[1]{\begin{equation}\label{eq:#1}}
\newcommand{\eeq}{\end{equation}}

\newcommand{\bsplit}[1]{\begin{equation}\label{eq:#1}\begin{split}}
\newcommand{\esplit}{\end{split}\eeq}




\title{Transport in Transitory Dynamical Systems \thanks
      {
        BAM and JDM were supported in part by NSF grant DMS-0707659.
        Useful conversations with Robert Easton, Doug Lipinski, and Troy Shinbrot are gratefully acknowledged.
      }}
\author{B.~A.~Mosovsky $^\dag$ and J.~D.~Meiss 
        \thanks{Department of Applied Mathematics, 
        University of Colorado, Boulder, CO 80309-0526 
        ({\tt brock.mosovsky@colorado.edu, james.meiss@colorado.edu})}
}
\date{\today}
\begin{document}

\maketitle

\begin{abstract}
We introduce the concept of a ``transitory'' dynamical system---one whose time-dependence is confined to a compact interval---and show how to quantify transport between two-dimensional Lagrangian coherent structures for the Hamiltonian case. This requires knowing only the ``action" of relevant heteroclinic orbits at the intersection of invariant manifolds of ``forward" and ``backward" hyperbolic orbits. These manifolds can be easily computed by leveraging the autonomous nature of the vector fields on either side of the time-dependent transition. As illustrative examples we consider a two-dimensional fluid flow in a rotating double-gyre configuration and a simple one-and-a-half degree of freedom model of a resonant particle accelerator. We compare our results to those obtained using finite-time Lyapunov exponents and to adiabatic theory, discussing the benefits and limitations of each method.
\end{abstract}

\begin{keywords}
  Hamiltonian systems, transport, Lagrangian coherent structures, Lagrangian action, adiabatic invariant
\end{keywords}

\begin{AMS}
  37J45, 37D05, 37C60
\end{AMS}

\pagestyle{myheadings}
\thispagestyle{plain}
\markboth{B.~A.~MOSOVSKY AND J.~D.~MEISS}{TRANSPORT IN TRANSITORY DYNAMICAL SYSTEMS}

\section{Transitory Systems}\label{sec:Introduction}

Invariant manifolds have long been recognized as important structures that govern global behavior in dynamical systems.  Hyperbolic manifolds in particular, by their very definition, relate information about the exponential contraction and expansion of nearby trajectories within the flow and so play crucial roles in the dynamics of such systems, lending insight into the mechanisms by which chaos, mixing, transport, and other complex global phenomena occur.  Transverse intersections of stable and unstable manifolds give rise to lobes defining of packets of trajectories that exit or enter coherent structures or resonance zones bounded by pieces of the manifolds. Thus, tracking these lobes provides a means for quantifying flux between coherent structures in the flow.

The treatment of mixing and transport for aperiodically time-dependent flows, however, requires the development of new methods because the concept of invariance may be too strong  and may not even lead to physically relevant structures. One popular method, the finite-time Lyapunov exponent (FTLE) has been used extensively in recent years to compute local approximations of invariant manifolds and identify structures that remain coherent in the Lagrangian sense on some finite time interval \cite{Haller00, Shadden05, Lekien07}.  Another idea uses a nonautonomous analog of hyperbolic orbits, called a distinguished hyperbolic trajectory \cite{Miller97, Rogerson99, Ide02, Madrid09}, and a third technique identifies approximately invariant regions as eigenfunctions of the Perron-Frobenius operator \cite{Froyland09}.

While these Lagrangian frameworks for identifying coherent structures in nonstationary systems have seen broad application, using them to accurately quantify transport and mixing over finite timescales remains a difficult task.  A few recent studies have managed to give numerical estimates for finite-time transport and mixing within aperiodic time-dependent flows \cite{Rogerson99, Coulliette01, Cardwell08, Mendoza10}. In addition, the instantaneous flux across a ``gate surface" can also be estimated using only local Eulerian information \cite{Haller98, Balasuriya06}, and the results have been applied to geophysical flows defined both analytically and by discrete data sets.  One such application is the investigation of eddy-jet interactions, in which the strength of the interaction is quantified by the amount of fluid entrained by or ejected from the eddy \cite{Poje99, Haller97}.  However, the accuracy of these methods is tied to the rate of change of the Eulerian velocity field, and, moreover, the instantaneous flux through a surface does not provide an estimate for the finite-time transport between two disjoint coherent structures.

Since the mere identification of Lagrangian coherent structures (LCS) in fully aperiodically time-dependent systems is a challenging problem itself, we study here a simpler problem---the quantification of finite-time transport between coherent structures in two-dimensional systems undergoing a transition between two steady states. Our methods are Lagrangian, in the sense that they rely on knowing certain key trajectories, and as such they allow us to compute the transport between two bounded coherent structures within the flow, as opposed to knowing only the flux across a single gate surface.  Even though we restrict the time-dependence to a compact interval, this problem provides, we believe, some insight into the more general aperiodic case. 

The systems of interest to us here are stationary in the limits $t \to \pm \infty$; these were called {\em asymptotically autonomous} systems by Markus \cite{Markus53}. However, we will assume that the system is nonstationary only on a compact interval, and will refer to these systems as {\em transitory}:

\begin{definition}[Transitory Dynamical System] \label{def:transitory}
A \emph{transitory dynamical system of transition time $\tau$} is one that is autonomous except on some compact interval of length $\tau$, say $[\tau_p, \tau_f]$ with $\tau_f - \tau_p = \tau$.  Thus on a phase space $M$, a transitory ODE has the form
  \beq{transitoryODE}
    \dot x = V(x,t)  , \quad  
    		V(x,t) = \left\{\begin{matrix} P(x) & t < \tau_p \\
			                       F(x) & t > \tau_f \end{matrix}\right. ,
  \eeq
where $P: M \to TM$ is the past vector field, $F: M \to TM$ is the future vector field, and $V(x,t)$ is otherwise arbitrary on  the transition interval $[\tau_p, \tau_f]$. 
\end{definition}

Though $P$ and $F$ are assumed autonomous, they could have arbitrarily complicated dynamics.  The {\em transition time} $\tau \equiv \tau_f-\tau_p$ is, relative to the time scales of $P$ and $F$, an especially important parameter for \Eq{transitoryODE} and, without loss of generality, we may set $\tau_p = 0$ so that $\tau_f = \tau$. 

It is not hard to envision many physical situations to which transitory dynamics would pertain. For example, any dynamical system that depends upon some parameters, $\dot x = V(x;p)$, can be made transitory if the parameters become time-dependent, $p \to p(t)$, and are allowed to switch from one state to another over a time interval of length $\tau$. We will consider a simple model of a particle accelerator in \Sec{pendulum} for which this is the case, but more realistic models could also be used \cite{Edwards04}. A similar physical system---also Hamiltonian---corresponds to a point particle in a billiard whose boundary evolves over time; for example, an ellipse with eccentricity that changes over a compact interval from one value to another (periodically oscillating boundaries have been studied in a number of cases, e.g. \cite{Lenz09}). Ecological models could also be transitory if the environment undergoes a shift (e.g., leading to a change in carrying capacity); one could study the ``transport" between basins of different equilibria under such a shift. Another class of examples corresponds to the change in flow regimes for a fluid in which there is an instability that grows and saturates. This could be due to external forcing (e.g., Rayleigh-Bernard convection or Taylor-Couette flow), to a flow that is driven by chemical reactions (e.g., Marangoni flow), or to an instability leading to eddy creation (e.g., for quasigeostrophic flows \cite{Rogerson99}). Finally, many models of fluid mixing could be studied in transitory regimes; for example, laminar flow through a pipe with a finite number of bends between two straight sections could be considered transitory \cite{Khakhar87,Aref02} and a fluid-filled cavity \cite{Chien86, Leong89, Anderson06} could be driven with a transitory mixing protocol.

For more general aperiodically time-dependent systems, infinite-time information cannot be used to identify coherent structures: indeed, in many fluid mechanical and observational applications, the behavior of the system is known only on a finite interval. This requires alternate definitions of approximate invariant manifolds.  These can be defined by extending the vector field to an infinite-time domain; however, since the extension is not unique, neither are the resulting manifolds \cite{Haller98, Sandstede00, Duc08, Yagasaki08}. Nevertheless, if the time of definition is long compared to the local expansion rates, this nonuniqueness results only in exponentially small corrections. By contrast, for transitory systems, stationarity outside the transition interval $[0,\tau]$ implies the classical notions of stable and unstable invariant manifolds can be used---see below.  Thus, transitory systems provide an aperiodic, time-dependent setting for which unique invariant manifolds exist. The advantages of using this asymptotic information will become clear when contrasting our methods with those employing FTLE in \Sec{RDG} and \Sec{pendulum}. In any case, a more general vector field that is defined only on a finite interval, $[0,\tau]$, say, could be extended by adding autonomous past and future vector fields to give a transitory vector field. Thus the relative unimportance of the form of the vector field's extension, given a long enough interval of definition, also applies to our case.

While we are not aware of other studies of transport for transitory systems, some aspects of transport for asymptotically autonomous systems in which the forward and backward limits are identical,
\beq{asympAuto}
  \lim_{t \to \pm \infty} V(x,t) = G(x),
\eeq
have been considered by Wiggins and collaborators \cite{Malhotra98, Samelson07}. A well-studied case corresponds to the adiabatic limit, when $\tau$ is large compared to the dynamical time scales of $P$ and $F$ \cite{Kruskal62}. Indeed, when \Eq{transitoryODE} is Hamiltonian and adiabatic, then the actions of the frozen system $P$ (or $F$) should be approximately preserved.  However as is well-known, adiabatic invariance breaks down near separatrices of the frozen system \cite{Cary86, Neishtadt87} and cases in which these separatrices sweep through a large portion of the phase space during the transition are of most interest to us. As was first noted by Elskens and Escande \cite{Elskens91}, when the time-dependence is periodic the entire region swept by the separatrix becomes a ``lobe" in the adiabatic limit. Transport properties have been studied in this limit \cite{Kaper91, Beigie91a, Kaper92}; however, these researchers assume that the frozen system has a parameter dependent curve of hyperbolic equilibria and this is typically not true for \Eq{transitoryODE}.  Moreover, we will not assume that $\tau$ is large.

In this paper we are interested in quantifying the transport between coherent structures of the past and future vector fields $P$ and $F$ for two-dimensional systems of the form \Eq{transitoryODE}. To define these, it is natural to consider hyperbolic orbits of $P$ and $F$ and their stable and unstable manifolds since initial segments of these often define invariant or nearly invariant structures such as {\em resonance zones} \cite{MMP87, Easton91, Lomeli09b}.  However, determining which structures of $P$ and $F$ are relevant to the dynamics of the full nonautonomous vector field $V$ requires special attention.  

It is natural to think of the dynamics of \Eq{transitoryODE} as occurring on the extended phase space $M \times \bR$. We will assume that $V$ has a complete flow, $\vphi_{t_1,t_0}: M \to M$ for any $t_0, t_1 \in \bR$, where $\vphi_{t_1,t_0}$ maps a point from its position at $t=t_0$ to its position at $t=t_1$.  Then every point $(x,t) \in M \times \bR$ has an orbit 
\[
  \gamma(x,t_0) \equiv \{(\vphi_{t_1,t_0}(x),t_1): t_1 \in \bR\} \subset M \times \bR, 
\]
and such sets are invariant in the sense that for any $\tau \in \bR$, $\gamma(x,t) = \gamma(\vphi_{\tau,t}(x),\tau)$. Of course, this notion of invariance is not restrictive: any subset $S \subset M$ can be taken to be the time-$\tau$ slice through an invariant set $\gamma(S, \tau) = \{ (x,t): x \in \vphi_{t,\tau}(S), t \in \bR\}$.  More generally the time-$t$ slice of any set $N \in M \times \bR$ can be defined using the standard projection  $\pi: M \times \bR \to M$ by
\beq{slice}
	N_t \equiv \pi(N \cap \{(x,t):x \in M \}) \subset M.
\eeq
Thus $\gamma_t(S,t) = S$.

If $\Lambda$ is any invariant set of the past vector field $P$, then the set $\{(\Lambda,t): t \leq 0\} \subset M \times \bR$ is a \emph{backward-invariant} set in the extended phase space. For example, equilibria and periodic orbits of $P$ are slices of backward-invariant sets of $V$. Similarly any invariant set of $F$ is a time-$t$ slice of a \emph{forward-invariant} set of $V$ for each $t > \tau$. Since the nonautonomous portion of the dynamics of \Eq{transitoryODE} is assumed to occur on a compact interval, it can be effected by a map, the {\em transition map} $T: M \to M$, defined as
\beq{transitionMap}
	T(x) = \vphi_{\tau,0}(x).
\eeq
Consequently an invariant set $\Lambda$ of $P$ becomes $T(\Lambda)$ at time $\tau$ and thereafter evolves under $F$.  If the dynamics of $P$ and $F$ are known, then the only nontrivial work we must do is to characterize the map $T$.

In addition to equilibria and periodic orbits, the unstable manifolds of orbits of $P$ and  stable manifolds of orbits of $F$ are also slices of invariant sets for $V$. For example, let $\Wu(\Lambda, X)$ denote the unstable manifold of an invariant set $\Lambda$ under a vector field $X$; it is the set of points that are asymptotic to $\Lambda$ as $t \to -\infty$. Consequently, if $\Lambda$ is an invariant set of $P$, its unstable manifold is a slice of the unstable manifold of the invariant set $\gamma(\Lambda,0)$ of $V$:
\[
	 \Wu(\Lambda,P) = \Wu_t(\gamma(\Lambda,0)) \equiv \Wu_t(\gamma(\Lambda,0),V), \quad  t < 0.\footnote
{
   As indicated here, we often will omit the $V$ from the notation 
   as it represents the default evolution.
}
\]
However, a stable manifold $\Ws(\Lambda,P)$ is \emph{not} a slice of a stable manifold for $V$ since, due to the transition, it does not in general consist of points asymptotic to the orbit of $\Lambda$ as $t \to \infty$.  Instead, since the forward dynamics for any $t > \tau$ is determined by $F(x)$, the stable manifolds of invariant sets of $F$ are time-$t$ slices of the stable manifolds for $V$.  The unstable manifolds of invariant sets of $F$ have little dynamical relevance to $V$.

An example is sketched in \Fig{transitoryLoops}. Here $p$ is a hyperbolic saddle for $P$, but its orbit under $V$, $\gamma(p,0)$, is not hyperbolic. Indeed, when the point $T(p)$ does not lie on a hyperbolic orbit of $F$, $\gamma(p,0)$ has no stable manifold. Nevertheless, it does have an unstable manifold $\Wu(\gamma(p,0))$ and the temporal slices of this manifold coincide with $\Wu(p,P)$ when $t<0$. Furthermore, this manifold is an invariant set of $V$ with $\Wu_\tau(\gamma(p,0)) = T(\Wu(p,P))$ and its subsequent structure is obtained by simply evolving this set with $F$. We will say that such an orbit is {\em backward hyperbolic}. Similarly, if $f$ is a hyperbolic fixed point for $F$ it is a {\em forward hyperbolic} orbit of $V$, but not generally hyperbolic under $V$. Slices of the stable manifold $\Ws(\gamma(f,\tau))$ agree with $\Ws(f,F)$ for each $t > \tau$.  We formalize these notions as follows:
\begin{definition} \label{def:bfHyperbolic}
  A time-$t_0$ slice of an invariant set $\Lambda$ of a transitory dynamical system is \emph{backward hyperbolic} provided $\vphi_{t,t_0}(\Lambda_{t_0})$ is hyperbolic under $P$ for all $t < 0$.  It is \emph{forward hyperbolic} provided $\vphi_{t,t_0}(\Lambda_{t_0})$ is hyperbolic under $F$ for all $t > \tau$.
\end{definition}

\InsertFig{transitoryLoops}{A backward-hyperbolic orbit $\gamma(p,0)$ with homoclinic loop $\Gamma_p \subset \Wu(p,P)$ and a forward-hyperbolic orbit $\gamma(f,\tau)$ with homoclinic loop $\Gamma_f \subset \Ws(f,F)$. Here the unstable manifold of the orbit of $p$ intersects the stable manifold of the orbit of $f$ at $t=\tau$ at the heteroclinic points $h_1,\, h_2 \subset T(\Gamma_p) \cap \Gamma_f$ under a transitory flow.}{transitoryLoops}{4in}

In the simplest case, $\dim(M) = 2$ and $\Gamma_p \subset \Wu(p,P)$ and $\Gamma_f \subset \Ws(f,F)$ are homoclinic loops for $P$ and $F$, respectively, as sketched in \Fig{transitoryLoops}. The regions bounded by $\Gamma_p$ and $\Gamma_f$ can be thought of as {\em Lagrangian coherent structures} (LCS). If the former evolves under the map \Eq{transitionMap} to intersect the latter, then we can characterize the transport from one coherent structure to the other by these intersections.

Lagrangian coherent structures for nonautonomous systems are usually defined in terms of finite time stability exponents.  Haller and Yuan define them as regions bounded by ``material lines with locally the longest or shortest stability or instability time" \cite{Haller00}, while other researchers refer to LCS as curves or surfaces instead of regions, and identify these as ``ridges'' in the finite-time Lyapunov exponent field \cite{Shadden05, Mathur07, Branicki10}. In either case, since these coherent structures are defined only by finite time information, they may also persist only for some finite time. For \Eq{transitoryODE}, we will think of LCS as being objects bounded by separatrices of $P$ for $t<0$ or of $F$ for $t>\tau$.  These structures may also be ephemeral under the vector field $V$; however, for a transitory system we think of only one event, encapsulated by the transition map $T$, as creating or destroying an LCS. 

In \Fig{transitoryLoops}, the coherent structure bounded by $\Gamma_p$ in the past vector field $P$ is destroyed by the transition map $T$, giving rise to a new structure bounded by $\Gamma_f$ in the future vector field $F$. There are two heteroclinic points $\{h_1$, $h_2\} =  T(\Gamma_p) \cap \Gamma_f$ in the time-$\tau$ slice; they are backward asymptotic to $p$, forward asymptotic to $f$, and hence fully hyperbolic under $V$. Consequently, the set of orbits that begin inside $\Gamma_p$ but evolve to escape from $\Gamma_f$ is defined by the lobe $R$ bounded by the segments of $\Gamma_f$ and $T(\Gamma_p)$ between $h_1$ and $h_2$.  More generally, there may be more than one lobe and each lobe boundary may consist of more than two manifold segments, or equivalently, contain more than two heteroclinic points.  We discuss the computation of lobe areas in this general case in \Sec{ActionForm}.

For higher dimensional flows, even though the invariant manifolds $\Ws(p,P)$ and $\Wu(f,F)$ are not dynamically relevant for the full vector field $V(x,t)$, they may useful for defining {\em resonance zones} of $P$ and $F$, where a resonance zone with a ``small" escaping flux is a ``nearly" invariant set \cite{MMP87, Easton91, Lomeli09b}.  Particles initially in a resonance zone of $P$ when $t < 0$ will be approximately trapped up to time $0$, and those that find themselves in a resonance zone of $F$ at $t=\tau$ will be approximately trapped in the future. This makes low-flux resonance zones good candidates for coherent structures in the past or future vector fields.

In the remainder of the paper, we consider several simple examples of transitory systems for which the full vector field is a convex combination of the past and future vector fields:
\beq{transitoryV}
	V(x,t) = (1-s({t}))P(x) + s({t})F(x) .
\eeq
Here $s:\bR \to [0,1]$ is a \emph{transition function} satisfying
\beq{transitionFunction}
	 s(t) = \left\{ \begin{matrix} 0 &t < 0 \\ 
	                               1 &t > \tau
	                 \end{matrix} \right. ,	                             
\eeq
for transition time $\tau$.  While \Eq{transitoryV} is transitory in the sense of \Def{transitory} for \emph{any} function $s$ satisfying \Eq{transitionFunction}, for simplicity we usually take $s$ to be monotone nondecreasing.  Examples of such functions of varying smoothness are given in \App{transitionFunctions}.

As a first example, consider the traveling wave model studied by Knobloch and Weiss \cite{Knobloch87, Weiss89}.  This model for an incompressible, two-dimensional fluid is given in terms of the stream function
\bsplit{twStreamfunction}
	\psi(x,y,t) &= \psi_0(x,y) +\eps b(t) \psi_1(x,y),\\
	&\psi_0(x,y) = -cy + A\sin(kx) \sin(y), \\
	&\psi_1(x,y) = y,
\end{split}\eeq
on the domain $M = [0,2\pi]\times[0,\pi]$. The vector field is given by $V = \hat z \times \nabla \psi$, where $\hat z$ is the unit normal to the $xy$-plane, so the equations of motion are
\beq{StreamEqs}
	\dot x = -\frac{\partial}{\partial y} \psi ,\quad
	\dot y = \frac{\partial}{\partial x} \psi,
\eeq
This system is Hamiltonian with $(x,y)$ representing the coordinate and momentum and $H(x,y,t) = -\psi(x,y,t)$.

While Knobloch and Weiss studied the dynamics of $\psi_0$ with time-periodic perturbations, an asymptotically autonomous case of \Eq{twStreamfunction} was studied in \cite{Malhotra98, Samelson07}. The latter assumed that $\lim_{t\to \pm \infty} b(t) \to b_\infty \in \bR$ so that the past and future vector fields are equal.  For this case $b(t)$ is a ``bump function'' that plays the role of the transition function $s$ in \Eq{transitoryV}. If instead,  we assume that $b(t)$ has support only on the compact interval $[0,\tau]$, then \Eq{StreamEqs} is transitory in the sense of \Def{transitory}. One such function is 
\beq{bump}
         b(t) = s(t) \left(1-s(t)\right),
\eeq
where $s(t)$ is any transition function \Eq{transitionFunction}.

For \Eq{twStreamfunction} with \Eq{bump}, $P$ is identical to $F$, and when $|c|<|A|$ there are two hyperbolic equilibria on each of the lines $y=0$ and $y=\pi$ (see \Fig{travelingWave}).  The four  saddles of the past and future vector fields are denoted $p_i$ and $f_i$, respectively. Note that as subsets of $M$, $p_i = f_i$; however, as subsets of the extended phase space, they are points on different temporal slices and so their evolution under $V$ is distinct. For the simple perturbation $\psi_1$, the orbits of these points under $V$ remain on their respective horizontal lines.  Moreover, if $b(t) \ge 0$, then $\vphi_{t,0}(p_{1,2}) \to f_1$ and $\vphi_{t,0}(p_{3,4}) \to f_3$  as $t \to \infty$.

\InsertFig{travelingWave}{Frozen time stream function at $t=0$ (left) and $t=\tau$ (right), and coherent structures for \Eq{twStreamfunction} with $A=k=\eps=1$ and $c=0.5$. The slices of the unstable manifolds in the right pane are obtained by numerically computing the transition map for $\tau = 4$ using the bump function \Eq{bump} with $s(t) = s_1(t/\tau)$, the cubic transition function in \Eq{polynomial}.}{travelingWave}{\linewidth}

The vector fields $P$ and $F$ each have two resonance zones in $M$ and these form the coherent structures of interest for the system \Eq{twStreamfunction}. For $P$, the two resonance zones should be thought of as being bounded by unstable manifolds since these will be slices of unstable manifolds of $V$; one is bounded by branches of $W^u(p_1, P) \cup W^u(p_2,P)$ and the other by $W^u(p_3,P) \cup W^u(p_4,P)$, as shown in \Fig{travelingWave}. For $F$ the resonance zones are bounded by stable manifolds: $W^s(f_1,F) \cup W^s(f_2,F)$ and $W^s(f_3,F) \cup W^s(f_4,F)$.  The unstable manifolds at $t=0$ evolve according to \Eq{StreamEqs}, and for the case shown in \Fig{travelingWave} numerical integration indicates that these intersect the stable manifolds at two heteroclinic orbits,
\beq{twHeteroclinic} \begin{split}
  h_1(t) &= W^u_t(p_1) \cap W^s_t(f_2), \\
  h_2(t) &= W^u_t(p_3) \cap W^s_t(f_4).
\end{split} \eeq
The lobes formed by these intersections, labeled 
$R_1$ and $R_2$, correspond to the trajectories that begin \emph{inside} the resonance zones of $P$ and end \emph{outside} the resonance zones of $F$.  By calculating the areas of these lobes we can quantify transport into and out of the coherent structures in the phase space, giving insight into the global dynamics of the system.

\section{Transitory Flux: Hamiltonian Case}\label{sec:Flux}
Coherent structures, by their very definition, denote regions of the phase space in which nearby trajectories behave similarly; examples include vortices or recirculation regions in fluid flows and resonance zones in the phase space of Hamiltonian systems. Since these coherent structures are typically composed of a large number of trajectories and since their boundaries separate dynamically distinct regions in the phase space, they can provide quite a bit of information about the global behavior of the system. In particular, knowing the incoming and exiting flux helps paint a global picture of the Lagrangian effects of time dependence within the vector field.  For example, for a particle accelerator we may wish to quantify the phase space volume corresponding to stable acceleration of particles (see \Sec{pendulum}).  In this case, computing the flux between coherent structures of the pre- and post-acceleration vector fields gives the desired quantity.

We proceed in this section to derive formulas for the flux between regions of phase space bounded by invariant manifolds of nonautonomous, one degree-of-freedom Hamiltonian systems (i.e. a system with $1\tfrac12$ degrees of freedom). The formulas obtained are the generalizations to the nonautonomous case of the action-flux formulas of \cite{MMP84}. For autonomous flows, these formulas were first obtained in \cite{MM86, MM88, MacKay91} and the nonautonomous case was studied in \cite{Kaper91} in the adiabatic limit.  

We note that \cite{Samelson07, Malhotra98} do compute lobe areas for asymptotically autonomous vector fields, in the sense of \Eq{asympAuto}.  However their theory relies on  $P = F$ and that these vector fields have a saddle equilibrium with a homoclinic trajectory. In the theory we present here, the saddles of $P$ and $F$ need not be the same and there need not be a homoclinic orbit of either autonomous vector field.  Instead we focus our attention on heteroclinic orbits of the time-dependent vector field $V$.

\subsection{Flux by the Lagrangian Action}\label{sec:ActionForm}

Here we will compute the flux for a $1\frac12$ degree-of-freedom Hamiltonian vector field $V =  (\partial_yH, -\partial_xH)$ for a  nonautonomous Hamiltonian $H(x,y,t)$. More formally, if $\omega$ is a symplectic form,\footnote
{ Our notation is given in \App{notation}.}
e.g. $\omega = dx \wedge dy$, then the Hamiltonian vector field is determined by $\imath_V\omega = dH$. In this section we do not need to assume that the system is transitory, but we do assume that the phase space $M$ is two-dimensional and that the symplectic form is exact.

As discussed in \Sec{Introduction}, computing the flux corresponds to finding the area of some closed and bounded region $R \subset M$. Using Stokes' theorem, the resulting two-dimensional integral can be immediately reduced to an integral over the boundary:
\beq{stokes}
	\mbox{Area}(R) = \int_R \omega  = -\int_{\partial R} \nu ,
\eeq
where  $\nu$ is the Liouville one-form defined by $\omega = -d\nu$; for example, $\nu = ydx$. When $R$ is bounded by segments of stable and unstable manifolds, the flux formulas of \cite{MMP84} reduce integrals of the form \Eq{stokes} to action differences between orbits lying at the endpoints of the manifold segments. The action is given by an integral of the phase space Lagrangian, $L: M \times \bR \to \bR$
\beq{lagrangian}
	L(x,y,t) = \imath_V\nu -H(x,y,t) ,
\eeq
or, $L = y\dot{x} -H$ with $\dot{x}(x,y,t) = \partial_y H(x,y,t)$.

The simplest case is sketched in the left pane of \Fig{lobes}. Suppose that $\gamma_f$ is a forward hyperbolic orbit, $\gamma_p$ is a backward hyperbolic orbit and $R$ is a region in the time-$\tau$ slice that is bounded by a stable-unstable pair of segments of time-$\tau$ slices, $\cS \subset \Ws_\tau(\gamma_f)$ and $\cU \subset \Wu_\tau(\gamma_p)$, that intersect only on their boundaries, $h_0$ and $h_1$. Choosing the orientation of $\cU$ and $\cS$ consistent with a counterclockwise traversal of the boundary of $R$ gives $\partial R = \cU + \cS$ and $\partial \cS = -\partial \cU = h_0-h_1$.  Then \Eq{stokes} yields
\beq{area}
	\mbox{Area}(R) = -\left( \int_{\cS} \nu + \int_{\cU} \nu \right) .
\eeq
Therefore, to find the area we may compute the integral of the Liouville form along segments of stable and unstable manifolds.  The following two lemmas give formulas for computing these integrals in terms of the phase space Lagrangian \Eq{lagrangian} and the heteroclinic orbits $h_i(t) \equiv \vphi_{t,\tau}(h_i)$.
 
\InsertFig{lobes}{Sketches of time-$\tau$ slices of lobes formed by stable and unstable manifold segments $\cS$ and $\cU$.  In the left pane the lobe is bounded by a pair of segments, and in the right pane by two pairs. Each segment is bounded by a pair of bi-asymptotic heteroclinic points $h_i$ and $h_{i+1}$.}{lobes}{5.5in}

\begin{lemma}\label{lem:PastAction}
Suppose that the orbits of $h_0$ and $h_1$ are backward hyperbolic and backward asymptotic, and that $\cU \subset \Wu_\tau(\gamma(h_i,\tau))$ is the time-$\tau$ slice of the unstable manifold that connects these points, with $\partial \cU = h_1- h_0$. Then
\beq{DeltaAMinus}
	 \int_{\cU} \nu = \Delta A^{-}_\tau(h_0,h_1) \equiv \int_{-\infty}^\tau 
	 	\big{[}L(h_1(s),s) - L(h_0(s),s)\big{]} ds .
\eeq
\end{lemma}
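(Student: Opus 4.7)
The plan is to interpret $\int_\cU \nu$ as the value at $t=\tau$ of a function of $t$, and to integrate this function back to $-\infty$, where the ``initial'' value will vanish by backward hyperbolicity. First I would parametrize $\cU$ by $\xi\colon[0,1]\to M$ with $\xi(0)=h_0$, $\xi(1)=h_1$, and introduce the flowed-back curves $\xi_t(u) = \vphi_{t,\tau}(\xi(u))$ for $t\le \tau$. Since $\cU\subset \Wu_\tau$, each $\xi_t$ lies in $\Wu_t$ of the orbit $\gamma_p$ to which $h_0$ and $h_1$ are backward asymptotic, and its endpoints are precisely $h_0(t)$ and $h_1(t)$.

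The heart of the argument is computing $\frac{d}{dt}\int_{\xi_t}\nu$. Since $\nu$ is $t$-independent and $\xi_t$ is transported by the nonautonomous flow of $V$, the standard transport identity gives $\frac{d}{dt}\int_{\xi_t}\nu = \int_{\xi_t}\cL_{V_t}\nu$, where $V_t(\cdot)=V(\cdot,t)$ and $\cL$ denotes the spatial Lie derivative. Cartan's magic formula together with $\imath_{V_t}\omega = dH(\cdot,t)$ then produces
\[
  \cL_{V_t}\nu = d(\imath_{V_t}\nu) + \imath_{V_t}(d\nu) = d(\imath_{V_t}\nu) - \imath_{V_t}\omega = d\bigl(\imath_{V_t}\nu - H(\cdot,t)\bigr) = dL(\cdot,t),
\]
so Stokes on the curve $\xi_t$ yields $\frac{d}{dt}\int_{\xi_t}\nu = L(h_1(t),t) - L(h_0(t),t)$. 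Integrating $t$ from $-\infty$ to $\tau$ then gives the stated formula, provided $\int_{\xi_t}\nu\to 0$ as $t\to -\infty$.

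The main obstacle will be justifying this boundary limit and the absolute convergence of the resulting integral. For $t<0$ the dynamics is generated by the autonomous field $P$ and $\xi_t\subset \Wu(\gamma_p,P)$, so backward hyperbolicity (\Def{bfHyperbolic}) forces exponential contraction of $\xi_t$ onto the hyperbolic orbit. Since $\nu$ is smooth and $\xi_t$ remains in a bounded neighborhood of $\gamma_p$ for $t$ sufficiently negative, the trivial estimate $\bigl|\int_{\xi_t}\nu\bigr|\le \|\nu\|_\infty \cdot \mathrm{length}(\xi_t)$ gives exponential decay to zero. The same exponential convergence of $h_0(t)$ and $h_1(t)$ to $\gamma_p(t)$, combined with smoothness of $L$, produces an exponentially decaying bound on $|L(h_1(s),s)-L(h_0(s),s)|$ for $s\le 0$, and this ensures absolute integrability of the integrand on $(-\infty,\tau]$.
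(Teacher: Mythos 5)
Your proof is correct and follows essentially the same route as the paper: Cartan's formula together with $\imath_V\omega = dH$ yields $\cL_{V}\nu = dL$, Stokes converts the spatial integral to an endpoint difference, and the boundary contribution vanishes as $t\to-\infty$ because the unstable manifold contracts onto the backward-hyperbolic orbit. The only cosmetic difference is that you differentiate $\int_{\xi_t}\nu$ directly in $t$, whereas the paper first integrates the pointwise identity $\frac{d}{ds}\vphi^{*}_{s,\tau}\nu = d(\vphi^{*}_{s,\tau}L)$ from $t$ to $\tau$ and then integrates over $\cU$; you are also slightly more explicit than the paper about the exponential estimates justifying both the vanishing boundary term and absolute integrability.
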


\begin{proof}
If we differentiate $\nu$ along $V$, Cartan's homotopy formula \Eq{Cartan} and 
\Eq{lagrangian} gives
\[
	\frac{d}{dt} \nu = \cL_V \nu = -\imath_V \omega + d(\imath_V\nu) = d L ,
\]
where $\cL_V$ is the Lie derivative \Eq{LieDerivative}.
Integrating this from $t$ to $\tau$, using \Eq{naturally} gives, for any $t$,
\[
	 \nu - \vphi^*_{t,\tau} \nu = \int_{t}^\tau \frac{d}{ds} \vphi_{s,\tau}^*\nu  ds = \int_{t}^\tau  d(\vphi_{s,\tau}^* L) ds .
\]
Consequently
\beq{flux} \begin{split}
	 \int_{\cU} \nu &= \int_{t}^\tau \left(\int_{\cU} d(\vphi_{s,\tau}^* L)\right) ds 
			+\int_{\cU}\vphi^*_{t,\tau} \nu \\
		&= \int_{t}^\tau \left(\int_{\vphi_{s,\tau}(\cU)} dL\right) ds +\int_{\vphi_{t,\tau}(\cU)} \nu \\
		&= \int_{t}^\tau \big{[} L(h_1(s),s) - L(h_0(s),s) \big{]} ds + \int_{\vphi_{t,\tau}(\cU)} \nu .
\end{split}
\eeq
Since $h_0(t) \to h_1(t)$, the length $|\vphi_{t,\tau}(\cU)| \to 0$ as $t \to -\infty$. Taking this limit yields the result \Eq{DeltaAMinus}.
\end{proof}

We note that $\Delta A_\tau^-(h_0,h_1)$ in \Eq{DeltaAMinus} is the difference between the ``past actions" of the orbits of $h_0$ and $h_1$.  A similar result, with an important sign change, holds for stable segments $\cS \subset \Ws_\tau(\gamma(f,\tau))$ connecting $h_1$ to $h_0$. In this case we replace $\cU$ by $\cS$ and let $t \to +\infty$ in \Eq{flux} to obtain

\begin{lemma}\label{lem:FutureAction}
Suppose that the orbits of $h_0$ and $h_1$ are forward hyperbolic and forward asymptotic, and that $\cS \subset \Ws_\tau(\gamma(h_i,\tau))$ is the time-$\tau$ slice of the stable manifold that connects these points, with $\partial \cS = h_0 - h_1$. Then
\beq{DeltaAPlus}
	\int_{\cS} \nu 
		= -\Delta A^{+}_\tau(h_1,h_0) \equiv -\int_\tau^{\infty} \big{[} L(h_0(s),s) 
			- L(h_1(s),s) \big{]} ds.
\eeq
\end{lemma}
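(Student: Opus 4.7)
The plan is to mimic the proof of \Lem{PastAction} with the direction of temporal integration reversed and the boundary orientation flipped; the two sign changes combine to produce the single overall minus sign and the swapped arguments in \Eq{DeltaAPlus}. As in that proof, the starting point is the pointwise identity $\cL_V \nu = dL$, which follows at once from Cartan's homotopy formula \Eq{Cartan}, the Hamiltonian equation $\imath_V \omega = dH$, and the definition \Eq{lagrangian} of the phase space Lagrangian; this step is identical.

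Next I would integrate this identity from $\tau$ forward to some $t > \tau$ (instead of backward from $t$ to $\tau$), obtaining
\[
   \vphi^*_{t,\tau}\nu - \nu \;=\; \int_\tau^t d(\vphi^*_{s,\tau} L)\, ds.
\]
Pairing both sides with $\cS$, using \Eq{naturally} to push forward the domain of integration, and applying Stokes' theorem on $\vphi_{s,\tau}(\cS)$ with the orientation $\partial(\vphi_{s,\tau}(\cS)) = h_0(s) - h_1(s)$ inherited from $\partial\cS = h_0 - h_1$ gives
\[
   \int_{\vphi_{t,\tau}(\cS)} \nu \;-\; \int_\cS \nu \;=\; \int_\tau^t \bigl[L(h_0(s),s) - L(h_1(s),s)\bigr]\, ds.
\]
Solving for $\int_\cS \nu$ already exposes the minus sign in front of the action integral that will become $\Delta A^+_\tau(h_1,h_0)$.

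The final step is to let $t \to +\infty$. Since $\cS$ is a segment of the stable manifold of the forward-hyperbolic orbit, and its two endpoints $h_0, h_1$ are, by hypothesis, forward asymptotic to a common orbit under $F$, the length $|\vphi_{t,\tau}(\cS)|$ shrinks to zero; hence $\int_{\vphi_{t,\tau}(\cS)}\nu \to 0$ and the formula \Eq{DeltaAPlus} follows. The only place that requires care is the orientation/sign bookkeeping: reversing the direction of the $s$-integration and simultaneously reversing the orientation of the bounding segment together produce exactly one net sign change relative to \Eq{DeltaAMinus}, which is why $\Delta A^+_\tau$ enters with a minus sign and its arguments $(h_1,h_0)$ appear in the opposite order from the past case. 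Once that is tracked consistently, no substantial obstacle remains — the decay of the boundary term is immediate from forward hyperbolicity, symmetric to the backward-hyperbolic decay used in \Lem{PastAction}.
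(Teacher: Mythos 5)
Your proposal is correct and matches the paper's own treatment: the paper explicitly says to replace $\cU$ by $\cS$ in \Eq{flux} and let $t\to+\infty$, which is exactly the computation you carry out. Your sign bookkeeping (the reversed orientation $\partial\cS = h_0-h_1$ producing the integrand $L(h_0)-L(h_1)$, and the limit $t\to+\infty$ producing the overall minus) is consistent with \Eq{DeltaAPlus}, and the decay of the boundary term $\int_{\vphi_{t,\tau}(\cS)}\nu$ follows from forward asymptoticity just as you say.
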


Here $\Delta A_\tau^+(h_1,h_0)$ is the difference between the ``future actions'' of the orbits of $h_0$ and $h_1$.  The algebraic area of the lobe $R$ in the left pane \Fig{lobes} can be obtained by plugging the results \Eq{DeltaAMinus} and \Eq{DeltaAPlus} into \Eq{area}, yielding the action difference,
\beq{lobeArea}
	\mbox{Area}(R) = \Delta A (h_1,h_0) \equiv \int_{-\infty}^{\infty} \big{[} L(h_0(s),s) 
			- L(h_1(s),s) \big{]} ds.
\eeq
Note that $\tau$ appears nowhere in this result; indeed, the lobe area is independent of the time at which it is measured since the Hamiltonian flow is area-preserving.

It is important to note that equation \Eq{DeltaAMinus} can be used to compute the area under \emph{any} segment of an unstable manifold between two \emph{backward-asymptotic} points $h_0$ and $h_1$.  That is, the points need not be bi-asymptotic as in \Fig{lobes}.  For example, if $h_0$ is replaced with $p(\tau)$, equation \Eq{DeltaAMinus} can be directly used to integrate the Liouville form $\nu$ along the initial segment of $\Wu_\tau(\gamma_p)$ from $p(\tau)$ to $h_1$. Similar generality holds for equation \Eq{DeltaAPlus}. Moreover, the derivations of \Eq{DeltaAMinus} and \Eq{DeltaAPlus} do not require that the system be transitory; they are valid for any nonautonomous Hamiltonian vector field.

Lobes can have a more complicated structure, even in the autonomous case \cite{RomKedar90c}; in general, a lobe at time $\tau$ is a region $R$ bounded by an alternating sequence of stable segments of $\gamma_f$ and unstable segments of $\gamma_p$ whose intersections are topologically transverse.\footnote
{If a heteroclinic point $h$ arises at an intersection that is not topologically transverse, the two boundary segments adjacent to $h$ can be combined (they necessarily have the same stability type) and $h$ can thus be ignored when calculating the lobe area.}
An example with two pairs of segments is shown in the right pane of \Fig{lobes}.
The general formula for the area of such a lobe follows easily from 
\Eq{DeltaAMinus} and \Eq{DeltaAPlus}.
Suppose that $R$ is bounded by $2N$ such segments, and label the heteroclinic points $h_i$, $i = 0,\ldots,2N-1$ in a counterclockwise ordering on $\partial R$ setting $h_{2N} \equiv h_0$.  Without loss of generality, suppose the segment joining $h_0$ and $h_1$ is a portion of unstable manifold, call it $\cU_{0}$. Again using a counterclockwise ordering, label the next segment $\cS_{0}$, followed by $\cU_{1}$, etc., so that $\partial R$ consists of the alternating sum of $N$ unstable  and stable segments $\cU_{i}$ and $\cS_{i},\; i = 0,\ldots, N-1$.  The orientation of these segments is chosen to be consistent with the counterclockwise boundary so that  $\partial \cU_{i} = h_{2i+1} - h_{2i}$ and $\partial \cS_{i} = h_{2i+2} - h_{2i+1}$. Note that these orderings, as sketched in \Fig{lobes}, are not necessarily the same as ordering along the manifolds $\Wu$ or $\Ws$. Using \Eq{DeltaAMinus} and \Eq{DeltaAPlus} for each of the integrals along $\partial R$, gives the lobe area
\beq{2NLobeArea}
  \mbox{Area}(R) = \sum_{i=1}^{N} \Delta A(h_{2i-1}, h_{2i}) = -\sum_{i=0}^{N-1} \Delta A(h_{2i}, h_{2i+1}) ,
\eeq
where $\Delta A$ is defined by \Eq{lobeArea}.
While the two sums in \Eq{2NLobeArea} are trivially identical, the first can be thought of as the sum of the action differences between orbits bounding segments of \emph{stable} manifold along $\partial R$ and the second as the negative sum of action differences between orbits bounding segments of \emph{unstable} manifold.

Lemmas~\ref{lem:PastAction} and \ref{lem:FutureAction} apply to arbitrary nonautonomous systems provided only that the points bounding the segments $\cU$ and $\cS$ are past or future asymptotic, respectively. For the special case of transitory systems, the integrals in \Eq{DeltaAMinus} and \Eq{DeltaAPlus} can be further simplified since the phase space Lagrangian is autonomous outside $(0,\tau)$:
\[
	L(x,y,t) = \left\{ \begin{array}{cc} L_P(x,y)  & t \le 0 \\
					     L_F(x,y)  & t \ge \tau
			   \end{array} \right. .
\]
For example consider the lobe depicted in the left pane of \Fig{lobes} so that $h_0(0)$ and $h_1(0)$ both lie on $\Wu(p,P)$, the unstable manifold of a hyperbolic fixed point $p = p(0)$ of $P$. Since this manifold is stationary for all $t \leq 0$, these points are merely time-shifts of one another under the flow of $P$.  If we suppose that $h_1(0)$ is further along $\Wu(p,P)$ from $p$ than $h_0(0)$ (as in \Fig{lobes}), then there exists a $t_P < 0$ such that $h_1(t_P) = h_0(0)$ and thus
\[
  h_1(t_P + \alpha) = h_0(\alpha) \quad \forall \alpha \in (-\infty, 0\, ] .
\]
Consequently, the integral in \Eq{DeltaAMinus} reduces to two integrals over compact intervals:
\beq{transitoryDeltaAMinus}
  \int_{\cU} \nu = \Delta A^{-}_\tau(h_0,h_1) = \int_{t_P}^0 L_P(h_1(s)) ds + \int_0^\tau \big{[}L(h_1(s),s) - L(h_0(s),s)\big{]} ds .
\eeq
Similarly, assuming that points $h_0$ and $h_1$ are oriented along the stable manifold as in \Fig{lobes}, there is a $t_F > \tau$ such that $h_0(t_F) = h_1(\tau)$.  Then \Eq{DeltaAPlus} reduces to
\beq{transitoryDeltaAPlus}
  \int_{\cS} \nu = -\Delta A^{+}_\tau(h_1,h_0) = -\int_\tau^{t_F} L_F(h_0(s)) ds .
\eeq
Combining these yields the simplified form of the area \Eq{lobeArea} for the case of transitory systems:
\beq{transitoryArea}
	\mbox{Area}(R) =\int_0^{t_F} L(h_0(s),s) ds - \int_{t_P}^\tau L_P(h_1(s)) ds .
\eeq
The formulas \Eq{transitoryDeltaAMinus} and \Eq{transitoryDeltaAPlus} prove useful in numerical computations, and they can be applied to each action difference of \Eq{2NLobeArea} in the case of a more complicated lobe bounded by $2N$ manifold segments. We will use them in the examples of \Sec{Examples}.

\subsection{Flux in the Adiabatic Limit}\label{sec:Adiabatic}

As in \Def{transitory}, a Hamiltonian $H(x,y,t)$ is transitory if
\[
	H(x,y,t) = \left\{ \begin{array}{ll}  H_P(x,y), &  t < 0\\
										 H_F(x,y), &  t > \tau
					 \end{array} \right. .
\]
If the transition time $\tau$ goes to infinity, adiabatic theory \cite{Kruskal62} may apply to such a system.  To consider this limit, it is helpful to reformulate the equations by introducing a ``slow time" variable
\[
	\lambda = \eps t ,
\]
where $\eps \equiv \tau^{-1}$ is to be thought of as small. Instead of thinking of $H$ as a function of time, through the transition function $s$, it is convenient to explicitly write it as a function of $s$,
\[
	\tilde H(x,y,s(\lambda)) = H(x,y,t) ,
\]
where $s$ is a transition function with transition time $1$. Now the past and future vector fields are generated by $H_P = \tilde H(x,y,0)$ and $H_F = \tilde H(x,y,1)$, respectively, and the vector field for $(x,y,\lambda) \in M \times \bR$ becomes
\bsplit{Adiabatic}
	\dot{x} &= \frac{\partial}{\partial y} \tilde H (x,y,s(\lambda)), \\
	\dot{y} &= -\frac{\partial}{\partial x} \tilde H(x,y,s(\lambda)),\\
	\dot{\lambda} &= \eps .
\end{split}
\eeq
The incompressible fluid flow \Eq{StreamEqs} is a transitory Hamiltonian system in this sense with $\tilde H(x,y,s) = -\psi(x,y,t)$.

For any finite $\eps$, the transition map \Eq{transitionMap} is now to be thought of as the time $\frac{1}{\eps}$ map
\[
	T(x,y) = \vphi_{\frac{1}{\eps}, 0}(x,y) .
\]
The \emph{frozen time} case corresponds to \Eq{Adiabatic} with  $\eps = 0$; alternatively it can be viewed as a family of autonomous systems on the phase space $M$ with Hamiltonians $\tilde H(x,y,s)$, $s \in [0,1]$.  

Adiabatic theory shows that, in certain cases, orbits of \Eq{Adiabatic} with $\eps \ll 1$ evolve so as to remain on orbits of the frozen time system with fixed \emph{loop action}, where the loop action for a closed curve $C$ is
\beq{loopAction}
  J(C) = \oint_C y\,dx ,
\eeq
i.e. the area enclosed.
Suppose there is a region $R \subset M$ in which every orbit of the frozen system is periodic and that there is a smooth family $\gamma_s \subset R$, $s \in [0,1]$ of periodic orbits of the frozen time systems with fixed action
\[
	J(\gamma_s) = J_0.
\]
Thus $\gamma_0$ is a periodic orbit of $P$, and $\gamma_1$ of $F$ and $J(\gamma_0) = J(\gamma_1)$. According to adiabatic theory, if $(x,y) \in \gamma_0$ and each of the $\gamma_s$ has a bounded period, then 
\beq{adiabaticity}
  d(T(x,y), \gamma_1) \to 0 \quad \textrm{as} \quad \eps \to 0,
\eeq
where $d(z,\Omega) =\inf_{\zeta \in \Omega} \| z-\zeta \|$ is the standard distance from a point $z$ to a set $\Omega$. 
The point is that $T$ approximately maps periodic orbits of $P$ to periodic orbits of $F$ with the same action, $T(\gamma_0) \approx \gamma_1$, when $\eps \ll 1$.

Adiabatic invariance breaks down if the frequencies of the periodic orbits in the family $\gamma_s$ are not bounded away from zero. This occurs, for example, when $\gamma_s$ approaches or crosses a separatrix of the frozen time system for some $s$. The resulting jumps in the action due to separatrix crossing were computed by \cite{Cary86, Neishtadt87}. 

The flux in the adiabatic limit was studied by Kaper and Wiggins \cite{Kaper91} under the assumption that the frozen systems $\tilde H(x,y,s)$ have a smooth, compact family of saddle equilibria, $p(s)$, with homoclinic loops $\Gamma_{p(s)} \subset \Wu(p(s)) \cap \Ws(p(s))$.  These authors state that the normally hyperbolic invariant manifold in the extended phase space,
\[
        \Lambda_0 = \{ (p(s(\lambda)),\lambda): \lambda \in \bR\} ,
\]
continues to a nearby normally hyperbolic invariant manifold, $\Lambda_\eps$, when $\eps$ is sufficiently small.  If the Melnikov function for \Eq{Adiabatic} has a nondegenerate zero, then the stable and unstable manifolds of $\Lambda_\eps$ intersect transversely for small $\eps$ defining a lobe  $R(\eps)$. Kaper and Wiggins show that the lobe area in the adiabatic limit becomes
\beq{AdiabaticLobe}
	\lim_{\eps \to 0} \mbox{Area}(R(\eps)) = J_{max} - J_{min},
\eeq
where
\[
	J_{max} = \max_{s \in [0,1]} J(\Gamma_{p(s)}) , \quad
	J_{min} = \min_{s \in [0,1]} J(\Gamma_{p(s)})
\]
are the maximum and minimum of the areas contained in the frozen time homoclinic loops.
The implication is that the region that is ``swept" by the separatrix is filled by the lobe, as was argued in \cite{Elskens91}. 

We will use \Eq{AdiabaticLobe} when possible in our examples below to discuss the limit $\tau \to \infty$; however, the assumption that the frozen time systems have a normally hyperbolic manifold of equilibria can be easily violated for a transitory system.

\section{Examples}\label{sec:Examples}

We will now consider several examples of transitory systems and use the  results of \Sec{Flux} to quantify transport between coherent structures.  For all examples we use the cubic transition function $s(t) = s_1(t/\tau)$ from \Eq{polynomial}, scaled so that $\tau$ is the transition time. Points along invariant manifolds are advected using a Runge-Kutta (5,4) Dormand-Prince pair. We represent the initial manifolds by a collection of equally-spaced points and add points using an adaptive interpolation method similar to that of Hobson \cite{Hobson93} when neighboring trajectories separate beyond a prescribed threshold. Similarly, trajectories are removed if the spacing decreases below a smaller threshold. We find the heteroclinic orbits from the advected manifolds using bisection to locate intersections of the stable and unstable invariant manifolds.

\subsection{Rotating Double Gyre} \label{sec:RDG}
The motion of a passive scalar in a two-dimensional incompressible fluid with the oft-studied double-gyre configuration is depicted in the left pane of \Fig{streamFig}. This configuration has been observed in both geophysical flows \cite{Poje99,Coulliette01} and experimental investigations of laminar mixing in cavity flows \cite{Chien86,Leong89}.  Here we consider a transitory flow that corresponds to a rotation of the two gyres by $\tfrac{\pi}{2}$ about $(\tfrac12,\tfrac12)$.  It is defined by the stream function
\bsplit{DoubleGyre}
	\psi(x,y,t) &= (1-s(t))\psi_P+ s(t) \psi_F  ,\\
	\psi_P(x,y) &= \sin(2\pi x)\sin(\pi y)  ,\\
	\psi_F(x,y) &= \sin(\pi x)\sin(2\pi y) ,
\end{split}
\eeq
and is Hamiltonian, with $H = -\psi$ and equations of motion \Eq{StreamEqs}.  Such a rotating regime could arise in a geophysical setting if the prevailing direction of the jet separating the two gyres changed over a finite time interval.  In terms of a cavity flow, \Eq{DoubleGyre} would model a regime in which a flow driven by upward movement of the left and right walls transitions smoothly to a flow driven by rightward movement of the top and bottom walls (cf. \Fig{streamFig}).  Taking a slightly different perspective, the transition in \Eq{DoubleGyre} could also be effected by modulating the flow boundary over the transition interval.  Such a changing boundary is used as a mixing mechanism in closed pipe flows, as investigated in \cite{Khakhar87,Aref02}, and is the driving force behind in-pipe mixing devices such as the Kenics$\textsuperscript{\textregistered}$ static mixer.

\InsertFigTwo{sin2sin}{sinsin2}{Orbits of the stream functions $\psi_P$ (left) and $\psi_F$ (right) for \Eq{DoubleGyre}. The unstable manifolds for the saddles of $\psi_P$ are shown in red and the stable manifolds of the saddles of $\psi_F$ are shown in blue.}{streamFig}{0.475\linewidth}

The dynamics of \Eq{DoubleGyre} preserves the boundaries of the square $M = [0,1]\times[0,1]$ and each of the corners of $M$ is an equilibrium of $V$. The corners $(0,0)$ and $(1,1)$, are hyperbolic saddles for the full vector field $V$; however, though the corners $(1,0)$ and $(0,1)$ are saddles for both $P$ and $F$, they are \emph{not} hyperbolic under the full field $V$. For example, the unstable manifold of $(0,1)$ is a subset of its stable manifold; the slices of these manifolds at $t=0$ are
\[
	\Wu_0(0,1) = \{(x,1): 0<x<\tfrac12 \} \subset \Ws_0(0,1) = \{(x,1): 0<x<1\} .
\]
Similarly the stable manifold of $(1,0)$ is a subset of its unstable manifold;  the slices at $t=\tau$ are
\[
	\Ws_\tau(1,0) = \{(1,y): 0<y<\tfrac12\} \subset \Wu_\tau(1,0) = \{(1,y):0<y<1\} .
\]
Thus, though $(0,1)$ and $(1,0)$ are both forward and backward hyperbolic, in the sense of \Def{bfHyperbolic}, neither is a hyperbolic orbit of $V$.

The past vector field also has two saddle equilibria at $p_0 = (\tfrac12, 0)$ and $p_1 = (\tfrac12,1)$, and the future vector field has saddles at $f_0 = (0,\tfrac12)$ and $f_1 = (1,\tfrac12)$.  Under \Eq{DoubleGyre} the orbits of these points remain on the invariant boundaries of $M$ and, as we shall see, these orbits play crucial roles in the delineation of Lagrangian coherent structures for this system and the quantification of the flux between them.  

The natural coherent structures for the past vector field of \Eq{DoubleGyre} are the left and right gyres separated by the unstable manifold 
\beq{Udef}
	\cU = \{(\tfrac12,y): 0<y<1\}
\eeq
of $p_1$, see \Fig{streamFig}. Note that for any $t<0$, $\Wu_t(\gamma(p_1,0)) = \cU$. Moreover, for any $t$, the stable manifold of this orbit is $\Ws_t(\gamma(p_1,0)) = \{(x,1): 0\le x < 1\}$, and consequently $\gamma(p_1,0)$ is a hyperbolic orbit of $V$. For the future vector field the top and bottom gyres are coherent structures with separatrix given by the stable manifold
\beq{Sdef}
	\cS = \{(x, \tfrac12): 0<x<1\}
\eeq
of the future hyperbolic point $f_1$. Note that $\Ws_t(\gamma(f_1,\tau)) = \cS$ for $t > \tau$, and $\Wu_t(\gamma(f_1,\tau)) = \{(1,y): 0 \le y<1\}$.

\InsertFigFour{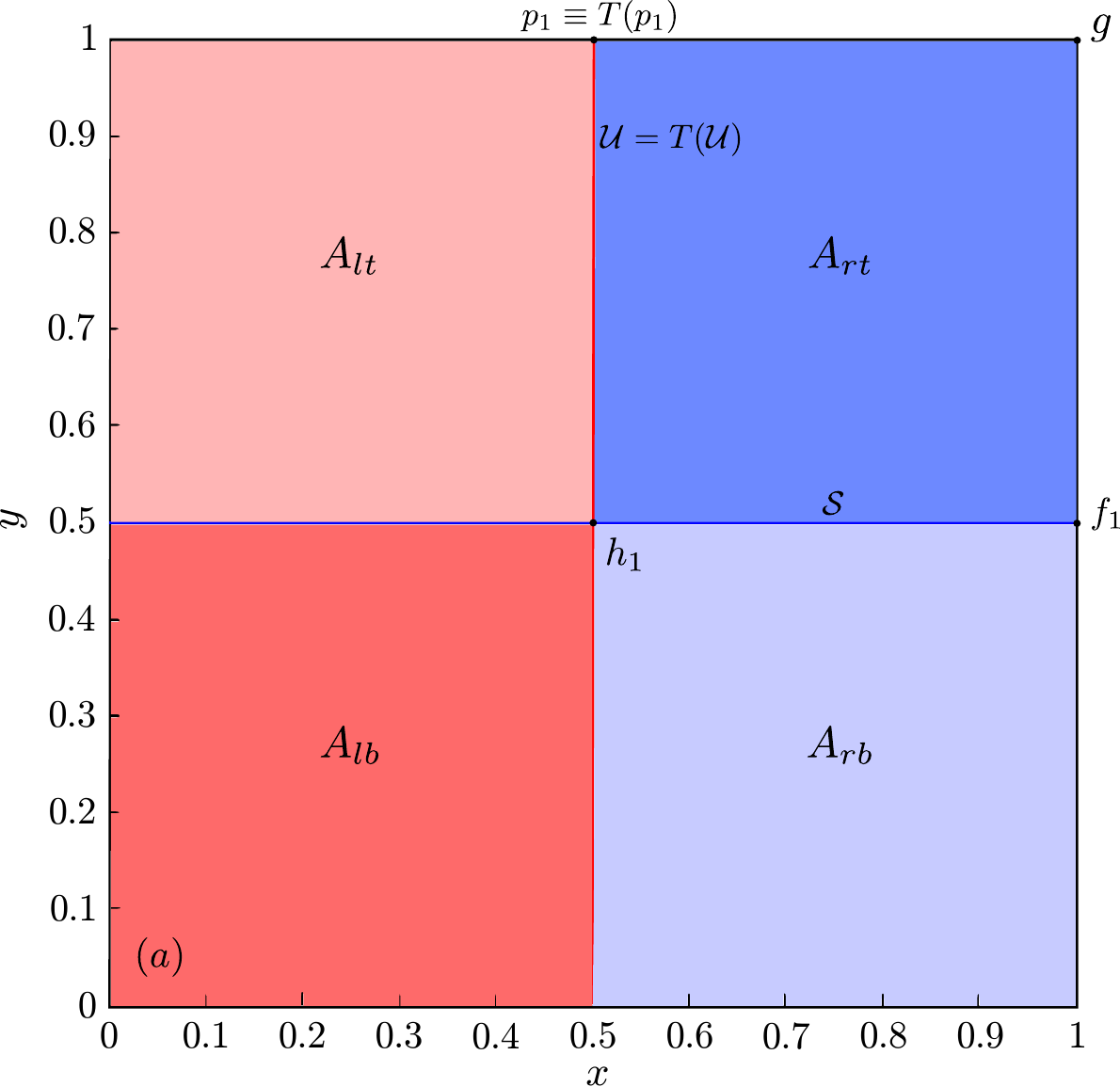}{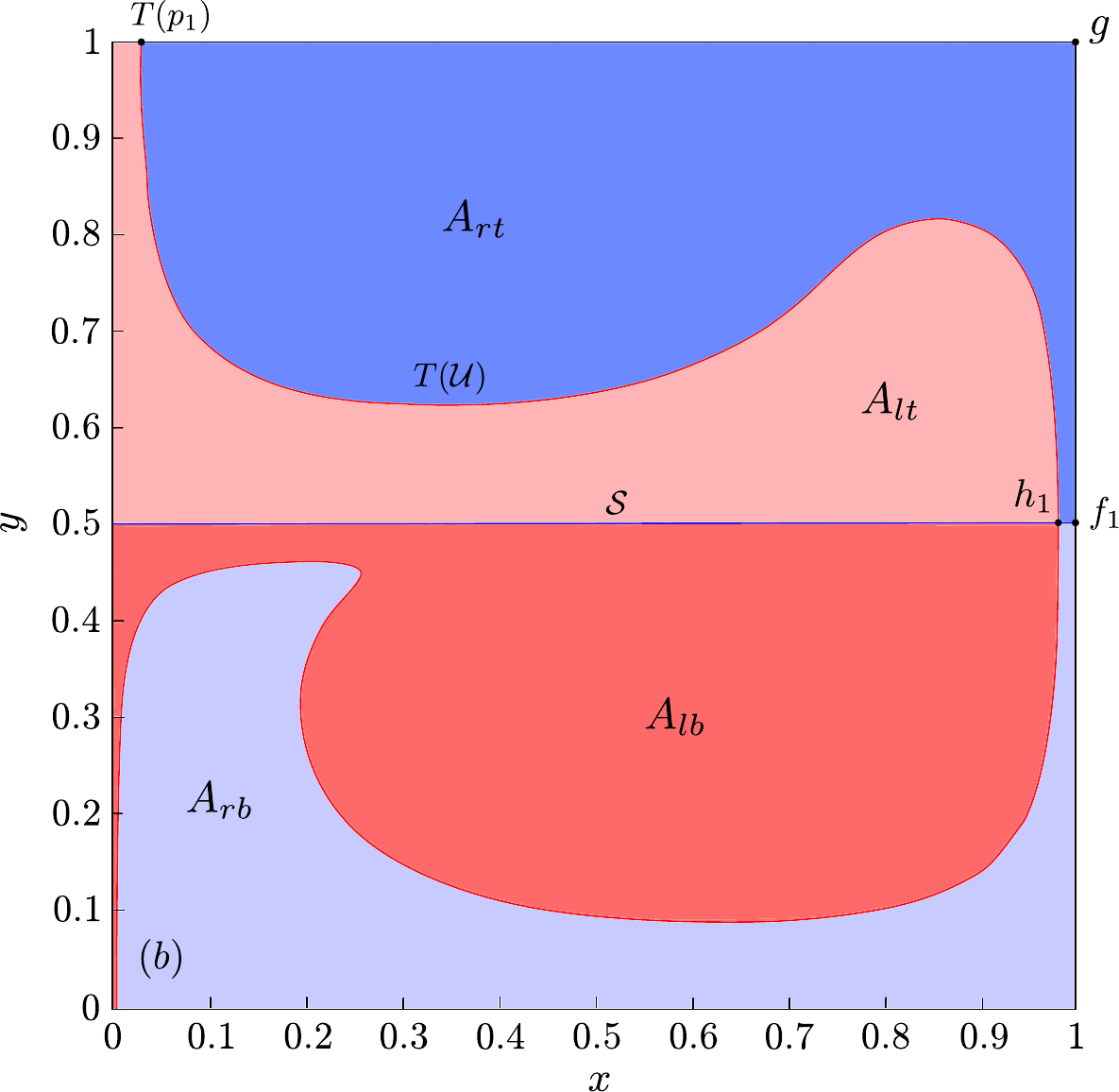}{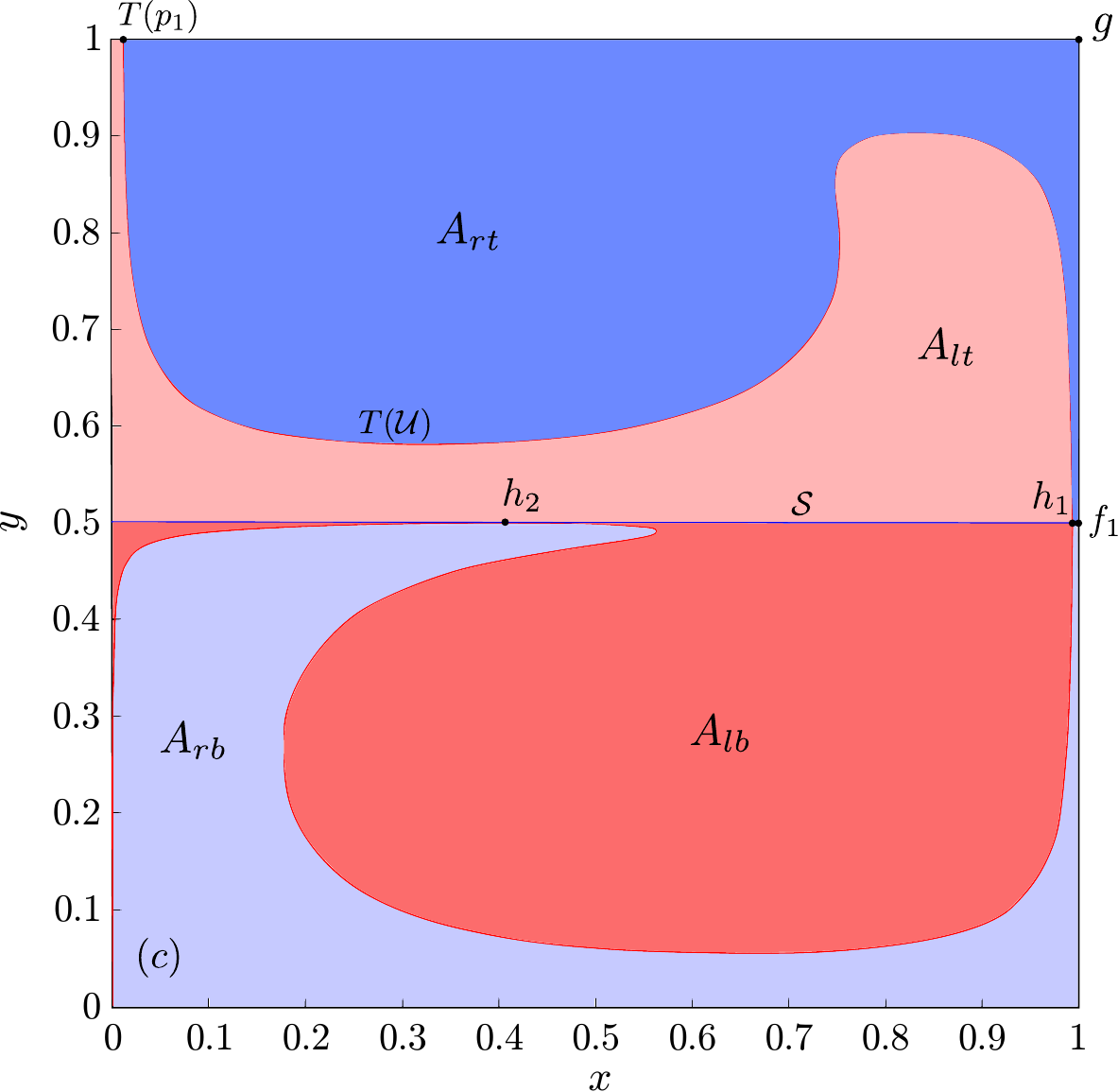}{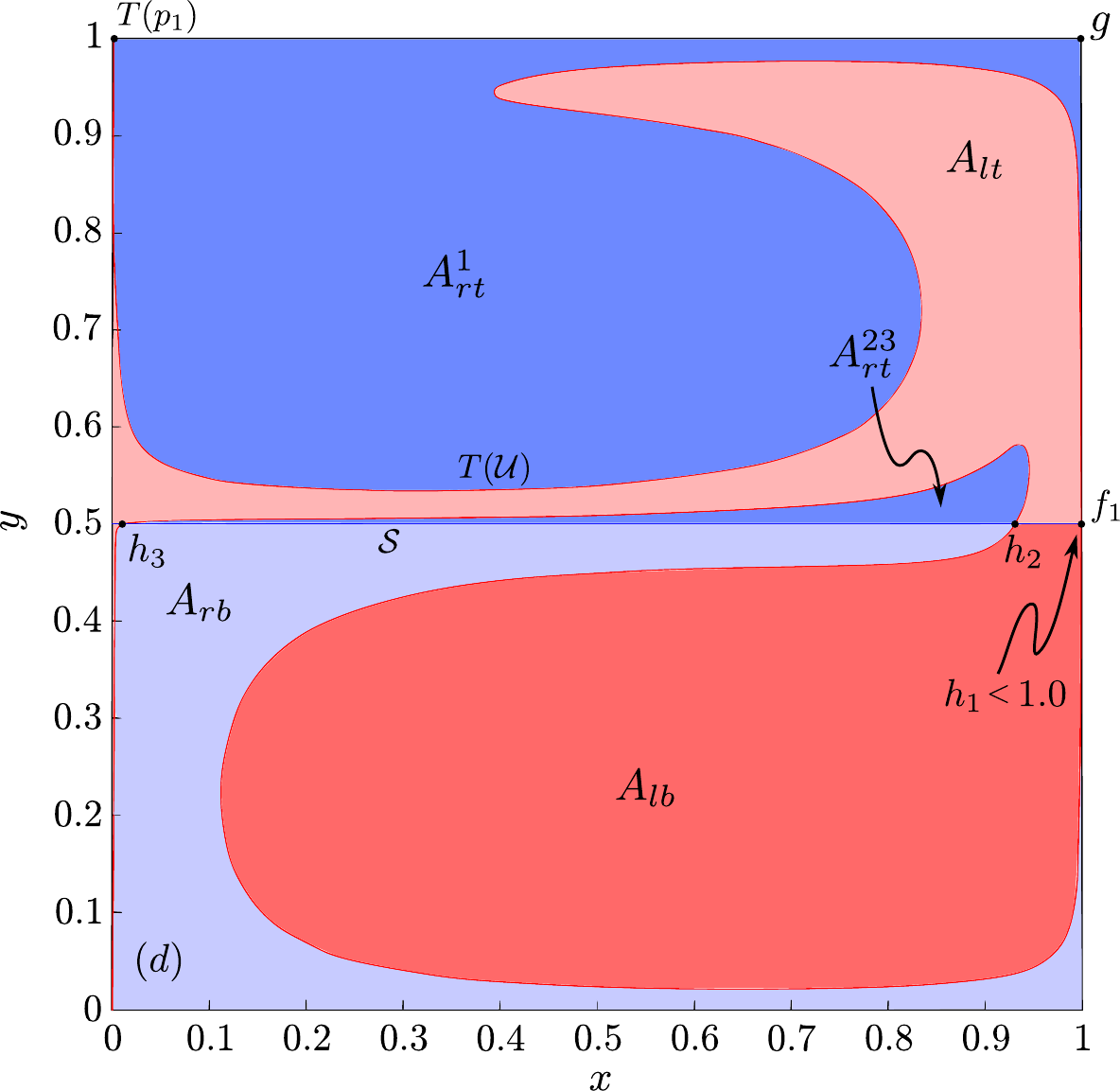}{$T(\cU)$ for transition times $\tau=0.0$ (a), $0.4$ (b), $0.5314$ (c) and $0.8$ (d) for the model \Eq{DoubleGyre}. Trajectories that begin in the left gyre are colored red, those that begin in the right gyre are colored blue, and the dividing curve between these is $T(\cU)$. The dark blue region, labeled $A_{rt}$, is the portion of the right gyre that ends in the top gyre at time $\tau$, and the dark red region, labeled $A_{lb}$, is the portion of the left gyre that ends in the bottom gyre. The heteroclinic points are labeled $\{ h_1,h_2,h_3 \} = T(\cU) \cap \cS$. A movie showing the variation of $T(\cU)$ as $\tau$ varies from  $0$ to $3.0$ is at [LINK MOVIE ``TDGManifolds.mov" HERE].  }{TDG}{0.475\linewidth}

Transport between the two pairs of gyres is completely determined by the image of the past separatrix, $T(\cU)$, or equivalently, the preimage of the future separatrix, $T^{-1}(\cS)$.  A movie showing the evolution of the manifolds starting with $\cU$ and $T^{-1}(\cS)$ at $t=0$ and ending with $T(\cU)$ and $\cS$ at  $t = \tau = 0.7$ is at [LINK MOVIE ``TDGAdvection.mov" HERE]. However, to use the formulas of \Sec{Flux}, we only need to know the orbits heteroclinic from $\gamma(p_1,0)$ to $\gamma(f_1,\tau)$, and at $t=\tau$ these consist of points on $T(\cU) \cap \cS$. At least one such heteroclinic orbit is guaranteed since $\partial M$ is invariant under \Eq{DoubleGyre}; that is, the segment $T(\cU)$ still connects the top to the bottom and thus must cross $\cS$ an odd number of times. When $\tau = 0$ the transition map is the identity and the only intersection is $T(\cU) \cap \cS = h_1 = (\tfrac12, \tfrac12)$ (see the top left pane of \Fig{TDG}). This intersection persists as $\tau$ increases, simply moving to the right along $\cS$ and finally limiting on $f_1$ as $\tau \to \infty$. For the cubic transition function, we observe that $h_1$ is the only intersection providing $\tau \lesssim 0.5314$, at which point a new pair of heteroclinic points, $h_2,\, h_3$, are created in a saddle-node bifurcation, as shown in the bottom left pane of \Fig{TDG}. The next heteroclinic bifurcation occurs at $\tau \approx 3.6908$ and the creation of heteroclinic orbits accelerates as $\tau$ increases; indeed for $\tau = 5.0$ there are $19$ such heteroclinic orbits and the number appears to grow without bound as $\tau \to \infty$.

For the  model \Eq{DoubleGyre} it is easy to obtain the manifold structure at $t=0$ from that at $t=\tau$, as shown in \Fig{TDG}, because the system has a time-reversal symmetry whenever the transition function obeys the relation
\beq{transitionSymmetry}
	s(\tau-t) = 1-s(t) .
\eeq
Note that the cubic function we use and all of the polynomial transition functions given in \App{transitionFunctions} have this property. In this case, the dynamics is reversed by the involution $R: M \times \bR \to M \times \bR$ defined by
\beq{reversor}
	R(x,y,t)=(y,x,\tau-t) .                                                         
\eeq
It is easy to see that $\psi$ is invariant under this transformation and that the vector field is reversed by $R_*$, the push-forward \Eq{pushforwardV} of $R$:
\[
	DR\,  V(y,x,\tau-t) = -V(x,y,t) .
\]
Consequently $R$ inverts the transition map,
	$T^{-1} = R \circ T \circ R$,
and since $R(\cS) = \cU$,
\[
	T^{-1}(\cS) = R(T(\cU)) . 
\]
Consequently, the phase portraits of $\cU$ and $T^{-1}(\cS)$ at $t=0$ can be obtained from those in \Fig{TDG} at $t =\tau$ by reflection about $y=x$ and exchanging $\cU$ and $\cS$.

Denoting the left and right gyres of the past vector field by $l$ and $r$ and the top and bottom gyres of the future vector field by $t$ and $b$ respectively, there are four fluxes of interest, $A_{i j}$, corresponding to the trajectories starting in gyre $i \in \{l,r\}$ at time $0$ and ending in gyre $j \in \{t,b\}$ at time $\tau$ (see \Fig{TDG}). For example, $A_{rt}$ is the area of the region that is to the right of $\cU$ for $t \leq 0$ and above $\cS$ for $t \geq \tau$. Thus, at $t=\tau$ these regions are bounded by segments of $T(\cU)$ and $\cS$.  They can consist of multiple disjoint lobes when there are additional heteroclinic points, as in pane (d) of \Fig{TDG} in which there are two lobes for $A_{rt}$.

Since \Eq{DoubleGyre} is incompressible and $\partial M$ is invariant,
\[
	A_{l t} + A_{l b} = A_{r t} + A_{r b} = \tfrac12 ,
\]
the total area of a single gyre in the past and future vector fields.  Moreover, since the top and bottom gyres are filled completely by the images of the left and right gyres,
\[
	A_{l t} + A_{r t} = A_{l b} + A_{r b} = \tfrac12 .
\]
Consequently, knowledge of one of these four areas uniquely determines the remaining three.

To calculate $A_{r t}$ for a given transition time $\tau$ we must integrate the form $\omega$ over the the dark blue region in \Fig{TDG}, or equivalently, integrate the form $-\nu$ over its boundary.  We first consider the case of only one heteroclinic point $h_1 = T(\cU) \cap \cS$.  In this case, $A_{rt}$ is comprised of a single lobe and, given the points $T(p_1) = (x_p,1)$ and $h_1 = (x_h,\tfrac12)$, the integrals along the top, right, and bottom edges of this lobe are trivial.  Then, 
\beq{Art}
	A_{r t} = \frac12(1+x_h) - x_p -\int_{T(\cU_{p_1}^{h_1})} \nu ,
\eeq
where $T(\cU_{p_1}^{h_1})$ is the oriented segment of $T(\cU)$ from $\gamma_\tau(p_1,0) = T(p_1)$ to $\gamma_\tau(h_1,\tau) = h_1$.  According to \Eq{DeltaAMinus}, the last term in \Eq{Art} is simply the backward action difference $\Delta A^{-}_\tau(T(p_1),h_1)$ between the orbits of $h_1$ and $T(p_1)$.  Thus, to evaluate \Eq{Art} we must compute $T(p_1)$ and $h_1$, and finally integrate the Lagrangian along the backward asymptotic orbits of $h_1$ and $T(p_1)$ from $-\infty$ to $\tau$ to compute this action difference.

Computation of $T(p_1)$ is straightforward and is accomplished by numerically integrating the vector field $V$ over the transition interval $[0,\tau]$ with initial condition $p_1$.  Computing $h_1$ is slightly more involved and requires a root-finding algorithm to determine the intersection of $T(\cU)$ with $\cS$ at $t=\tau$.  We begin with points equally spaced along $\cU$ at $t=0$ and numerically integrate $V$ over $[0,\tau]$ to compute the image under the transition map $T$ of each point.  As the manifold stretches during advection, we adaptively refine it to maintain the initial resolution.  That is, at the first time step in which two neighboring trajectories diverge beyond a prescribed separation tolerance, we initialize a new trajectory at their midpoint and continue to track it over the remainder of the interval.  We also remove points in much the same manner in areas where the manifold is contracting, helping to speed up computation.  Upon obtaining $T(\cU)$, we bracket the intersection with $\cS$ and use a bisection routine to determine $h_1$ to the desired accuracy, initializing new trajectories where necessary.  It should be noted that since unstable manifolds attract orbits in forward time, this adaptive refinement is a stable process and we incur minimal numerical error in the resulting manifolds $T(\cU)$ \cite{Hobson93}. In fact, results for $T(\cU)$ obtained by refining the initial point spacing at $t=0$ and by adaptively refining during integration were virtually indistinguishable, with the adaptive computation being an order of magnitude faster in some cases.  Finally, the Lagrangian \Eq{lagrangian}, which for this system is
\bsplit{DGLagrangian}
	L(x,y,t) &= y \dot{x} + \psi(x,y,t) = (1-s(t))L_P(x,y) + s(t) L_F(x,y) \\
	      L_P(x,y)   &= \sin(2\pi x) \big[ \sin(\pi y)-\pi y \cos(\pi y) \big] ,\\
	      L_F(x,y)   &= \sin(\pi x) \big[ \sin(2\pi y)-2\pi y \cos(2\pi y) \big] ,
\end{split}
\eeq
is integrated along the computed orbits $\gamma(p_1,0)$ and $\gamma(h_1,\tau)$ using Simpson's rule.

When there are additional heteroclinic points (e.g., $h_2$ and $h_3$ in \Fig{TDG}, pane $(d)$), they can be computed in precisely the same way as $h_1$, described above. For the case of three heteroclinics the total flux is $A_{rt} = A_{rt}^1 + A_{rt}^{23}$, where the two terms on the right-hand side represent the areas of the two disjoint lobes.  The area of the larger lobe $A_{rt}^1$ is calculated using \Eq{Art} while the smaller lobe is similar to that shown in \Fig{lobes} and hence we calculate its area according to \Eq{lobeArea}:
\[
	A^{23}_{r t} = \Delta A(h_3,h_2) = \int_{-\infty}^{\infty} \big{[} 
		L(h_2(t),t) - L(h_3(t),t) \big{]} dt ,
\]
which is positive by the counterclockwise orientation of the segments $\cU_{h_2}^{h_3}$ and $\cS_{h_3}^{h_2}$.

Computation of the last term in \Eq{Art} is greatly simplified using \Eq{transitoryDeltaAMinus}, since the system is transitory, and by noting that for \Eq{DGLagrangian} $ L_P(\tfrac12, y) \equiv 0 $ for any $y$. Similarly, \Eq{transitoryArea} can be used to simplify the computation of $A^{23}_{rt}$.

Results for the computation of $A_{rt}$ for transition times ranging from 0 to 3.69 are summarized in \Fig{TDGflux}.  Note the increase in the rate of change of flux at $\tau \approx 0.531$ corresponding to the emergence of the second lobe of area $A_{rt}^{23}$.  At $\tau \approx 3.69$, a new pair of heteroclinic points $h_4$ and $h_5$ is created and their corresponding manifolds delineate a new lobe of trajectories initially to the left of $\cU$  for $t<0$.  Indeed, each new heteroclinic bifurcation as $\tau$ increases creates a new lobe that alternately adds area to $A_{rt}$ or to $A_{lt}$.

\InsertFig{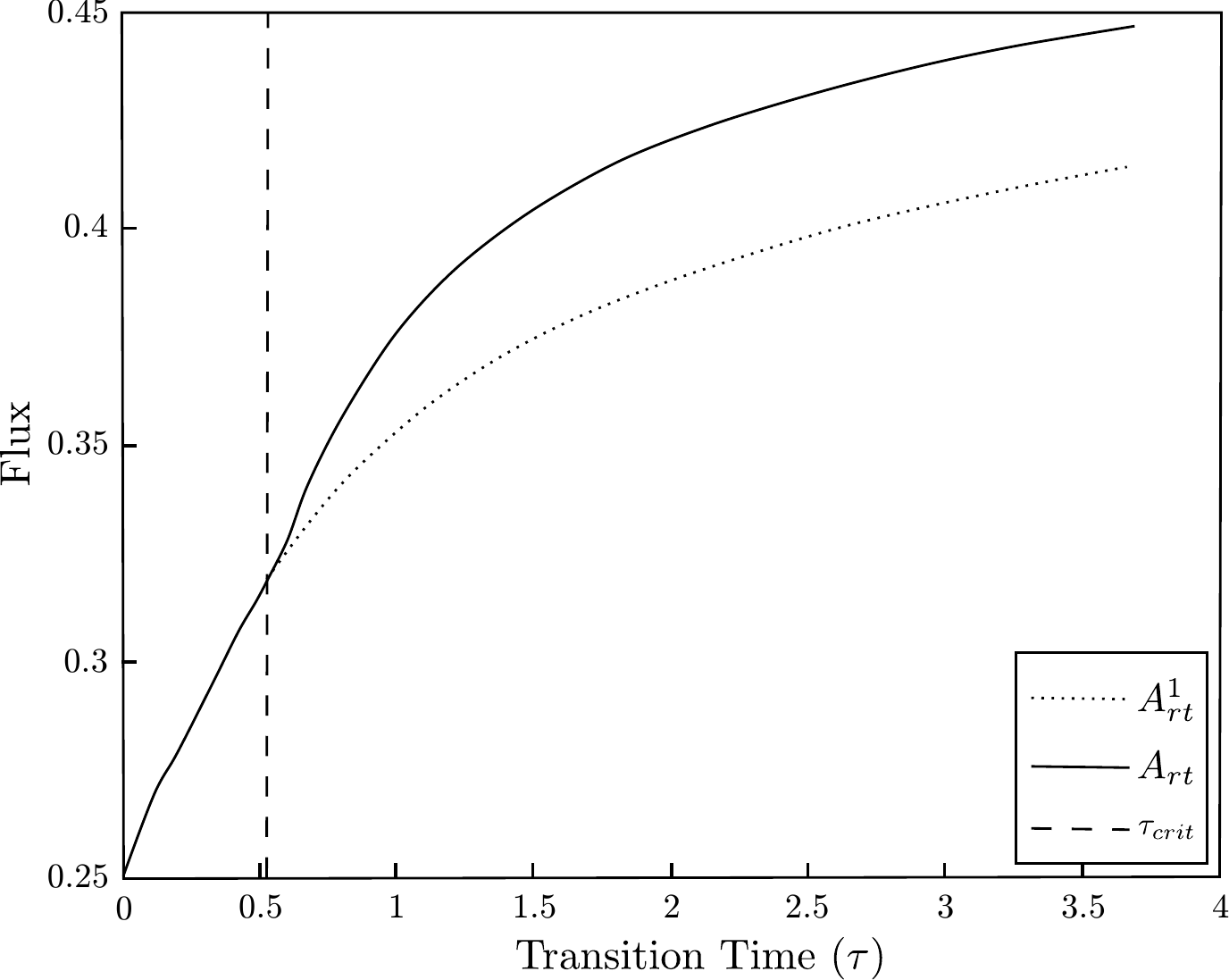}{Plot of the right-to-top flux, $A_{rt}$, as a function of transition time.  Note that a second lobe of area $A_{rt}^{23}$ emerges at $\tau_{crit} \approx 0.531$.  Its effect is manifested by the departure of the solid line from the dashed one for $A^1_{rt}$.}{TDGflux}{3in}


As $\tau \to \infty$ some of the orbits of \Eq{DoubleGyre} can be described by the adiabatic theory outlined in \Sec{Adiabatic}. For example, each periodic orbit in the neighborhood of the elliptic equilibrium of the left gyre of $P$ continues to a periodic orbit of the frozen-time system---setting $\tilde \psi(x,y,s(t)) = \psi(x,y,t)$---with fixed loop action. As $s$ grows from $0$ to $1$ in \Eq{DoubleGyre} the left gyre rotates by $\frac{\pi}{2}$, so these orbits evolve continuously to periodic orbits of $F$ enclosing the elliptic equilibrium of the bottom gyre. If a family of periodic orbits of the frozen time system with fixed loop action remains a bounded distance away from the family of separatrices of $\tilde \psi(x,y,s)$, then the period of each orbit in this family is bounded, and as $\tau \to \infty$ the adiabatic theory implies that the actual evolution will follow that of the frozen system. The implication is that when $\tau \gg 1$, $T$ will approximately map periodic orbits of $P$ in the left gyre to periodic orbits of $F$ in the bottom gyre with the same action (and similarly for the right and top gyres). 

An indication of the approach to adiabaticity is displayed in \Fig{TDGAdiabatic}, which shows four elliptic orbits $\gamma^i_0,\: i = 1,...,4$ of $P$ (left pane) and their images under the transition map $T$ for two values of $\tau$ (middle and right panes). The dashed curves represent orbits $\gamma^i_1$ of $F$ having the same loop actions as the initial orbits.  When $\tau$ is small, as in the middle pane, each of the images $T(\gamma^i_0)$ differs visibly from $\gamma^i_1$. Conversely, when $\tau$ is moderately large, as in the rightmost pane, $T$ maps the innermost three $\gamma^i_0$ virtually on top of the corresponding $\gamma^i_1$. The outermost orbit, $\gamma^4_0$, (red curve) maps to a loop with tendrils far from $\gamma^4_1$ as its period is not sufficiently small for adiabaticity to pertain at this value of $\tau$. Violation of adiabaticity could also be seen in its most extravagant form by $T(\cU)$ itself, which, as noted above, necessarily crosses the square from top to bottom and intersects the line $\cS$ infinitely many times as $\tau \to \infty$. Nevertheless, since increasingly many orbits of the left gyre map to their counterparts in the lower gyre as $\tau$ increases, adiabatic theory implies that
\[
  A_{lb} = A_{rt} \to 0.5 \quad \textrm{as} \quad \tau \to \infty,
\]
as \Fig{TDGflux} seems to suggest.

While this result is clearly demonstrated by the numerical evidence shown in \Fig{TDGAdiabatic} and the corresponding movie linked in its caption, the theory of Kaper and Wiggins \cite{Kaper91} for the flux in the adiabatic limit, outlined in \Sec{Adiabatic} above, does not apply directly to the double-gyre model \Eq{DoubleGyre}.  There is no family of homoclinic loops for $\tilde \psi(x,y,s(\lambda))$, though, as Kaper and Wiggins themselves point out, the theory could be straightforwardly extended for a family of heteroclinic cycles, as would be appropriate for the double-gyre.  When $s < \frac 12$, the ``left" gyre of $\tilde \psi(x,y,s(\lambda))$ is bounded by separatrices connecting  four hyperbolic equilibria: $(0,0)$, $(0,1)$, and a saddle on each of the upper and lower boundaries.  This family of separatrices looses hyperbolicity at $s = \frac12$, when the point $(0,1)$ is no longer a hyperbolic equilibrium of the frozen system and the separatrix becomes a triangle. Though the point $(0,1)$ again becomes hyperbolic when $s > \frac12$, it does not appear that the result \Eq{AdiabaticLobe} can be rigorously applied.

\InsertFig{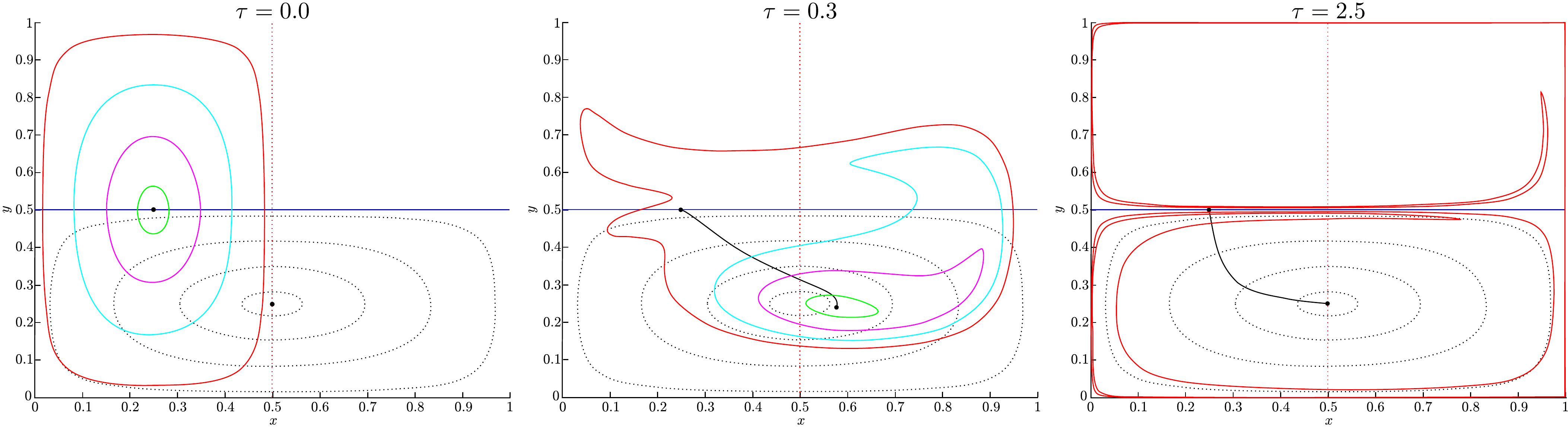}{Illustration of the approach to adiabatic invariance of periodic orbits for \Eq{DGLagrangian}. The solid curves depict the images of four orbits of $P$ under the transition map $T$ for $\tau= 0$ (left pane),  $\tau = 0.3$ (middle pane), and $\tau = 2.5$ (right pane). The dashed curves represent orbits of the future vector field with the same loop actions as the initial orbits. The solid black curve depicts the orbit of the left elliptic equilibrium of $P$ over $0 \le t \le \tau$. A movie showing the emergence of adiabatic behavior as $\tau$ increases is [LINK MOVIE ``TDGAdiabatic.mov" HERE]}{TDGAdiabatic}{\linewidth}

Finally, we compare our mode of analysis with a technique commonly used for analyzing time-dependent flows: the finite-time Lyapunov exponent (FTLE).  We will comment only briefly here on the similarities and differences between these two methods in the context of identifying heteroclinic trajectories and computing lobe areas for \Eq{DoubleGyre}.  A more thorough explanation of the use of FTLE for approximating invariant manifolds is given in \cite{Shadden05, Haller01}.

\InsertFig{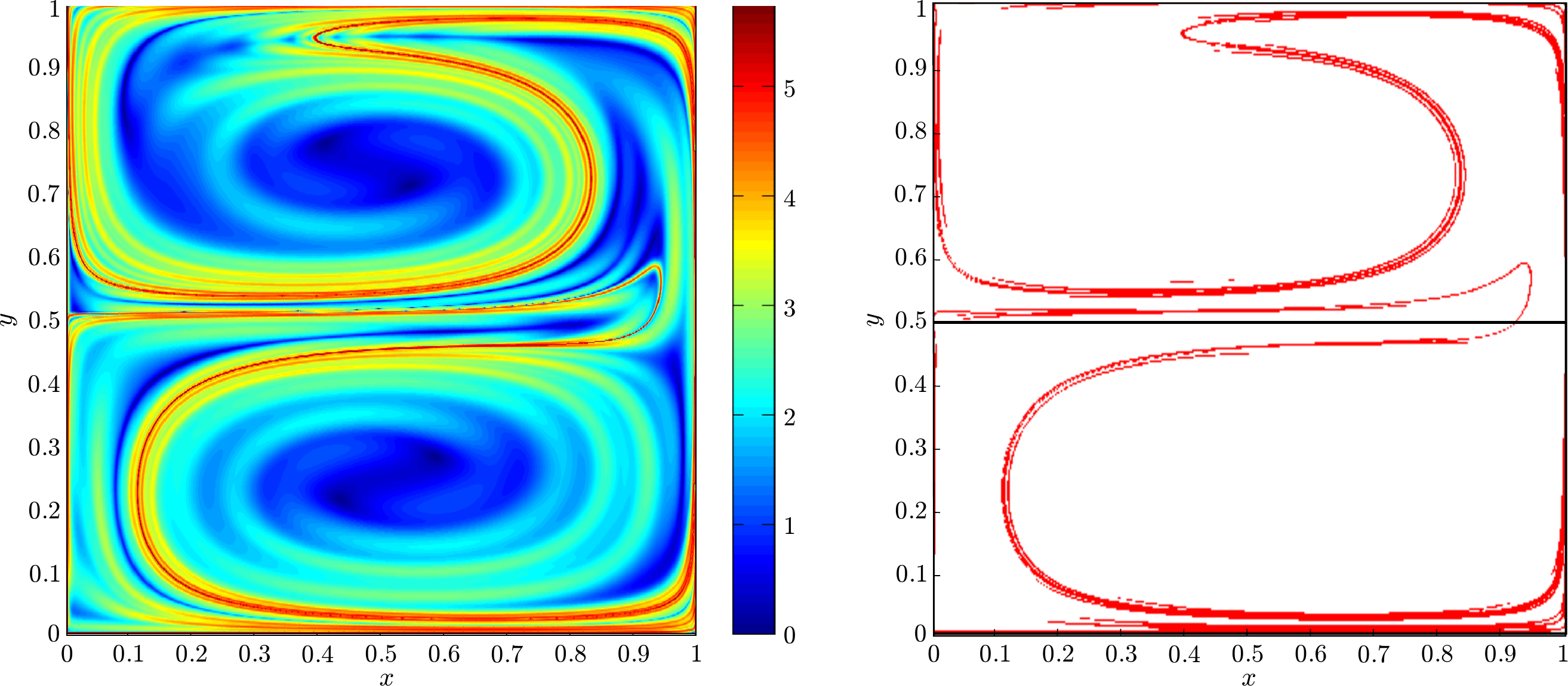}{ Backward time FTLE field for \Eq{DoubleGyre} at the transition time $\tau = 0.8$ using a backward integration time of 1.2 and a $1500 \times 1500$ grid. The left pane shows contours of the FTLE value from blue (smallest) to red (largest), and the right pane shows the ridge extracted by keeping only those values within 30\% of the maximum. This ridge gives an approximation to $T(\cU)$.}{dgFTLE}{\linewidth}

The backward-time FTLE field for \Eq{DoubleGyre} with transition time $\tau = 0.8$, is shown in the left pane of \Fig{dgFTLE}. The ``ridges'' of the FTLE field approximate the unstable boundaries of LCS; the red regions in the figure.  Comparing \Fig{dgFTLE} with pane $(d)$ of \Fig{TDG}, one can see that the most prominent ridge of the FTLE field corresponds to the curve $T(\cU)$.  However, the global picture given by the FTLE field is complicated by secondary ridges near the main ridge.  These secondary ridges are common (see for example \cite{Shadden05, Haller01, Brunton10}) and, while it is unclear whether they are numerical anomalies or they offer physical insight into the stretching of nearby trajectories over the time-scale used for computation, they complicate the numerical extraction of the most prominent ridge and the identification of any heteroclinic orbits.

As noted in \cite{Shadden05}, the ridges of the FTLE field become ``more Lagrangian'' as the integration time grows.  Since the transitory system \Eq{DoubleGyre} has a trajectory $\vphi_{t,0}(p_1)$ that is truly backward hyperbolic (i.e. it does \emph{not} lose its backward hyperbolicity for any time $t \in \bR$), as the integration time increases we should expect the most prominent ridge of the backward-time FTLE to become increasingly aligned with the unstable manifold $T(\cU)$ of $T(p_1)$ at time $\tau$.  We do indeed observe this; however, the secondary ridges also become more pronounced, making extraction of the main ridge increasingly difficult.  This places a practical upper limit on the length of the approximate invariant manifold that can be computed using FTLE calculations, as the numerical extraction of the ``main'' ridge becomes infeasible for large integration times.  

Several numerical methods for efficiently extracting the appropriate ridges from the FTLE field have been proposed \cite{SP07, GGTH07, Lipinski10}; however, in practice, the most prominent ridges are typically extracted by simply filtering out all values below a prescribed threshold.  Such a filter with the threshold set at 70\% of the maximum FTLE value is shown in the right pane of \Fig{dgFTLE}. For the chosen integration time, the secondary ridges are so close in height to the main ridge that they can not be removed by this simple height filter.  That is, as the filtering threshold is increased, gaps appear in the main ridge before all the secondary ridges have disappeared.  Thus, while the main FTLE ridge qualitatively agrees with the true unstable manifold $T(\cU)$, an accurate computation of flux between coherent structures is difficult to obtain with this method.

\subsection{Resonant Accelerator}\label{sec:pendulum}

As a second example, we consider a system that serves as a highly simplified model of a particle accelerator \cite{Edwards04}. Here the coherent structures are ``resonances" that result in the trapping of particles in an accelerating potential well, and the goal is to determine the phase space region that represents stable acceleration. In our model, this corresponds to orbits that begin within a stationary, past resonance and ultimately end in a moving, future resonance at $t=\tau$. 

The basic model is given by a Hamiltonian of the form
\[
	H(q,p,t) = \frac12 p^2  + V(q-\theta(t)).
\]
We assume that the potential well is initially stationary, then accelerates, and eventually reaches a constant velocity so that the phase $\theta(t)$ obeys
\beq{theta}
	\theta(t) = \left\{ \begin{matrix}  0 & t< 0 \\  \omega t +\phi & t > \tau  
						\end{matrix} \right..
\eeq
While this system is not transitory in the sense of \Def{transitory} (note the time dependence of the potential function $V$), we can convert it to one that is with the canonical transformation 
\[
  (q,p,H) \mapsto (\cQ,\cP,\cH) = (q-\theta(t), p, H -p\dot \theta(t)).
\]
Note from \Eq{theta} that the time derivative of the phase is proportional to a transition function \Eq{transitionFunction}, namely $\dot \theta(t) = \omega s(t)$.
 Reverting to the original variable names gives the new Hamiltonian
\beq{accHamiltonian}
	H(q,p,t) = \frac12 p^2 - \omega s(t) p + V(q) ,
\eeq
and so the past and future systems are autonomous:
\beq{accAutonHam}
	H(q,p,t) = \left\{ \begin{array}{lll}
						H_P(q,p) =& \frac12 p^2 + V(q), & t < 0 \\
						H_F(q,p) =& \frac12 (p-\omega)^2 + V(q) -\frac12 \omega^2, & t > \tau
					   \end{array}\right. .
\eeq
Specifically, for our model we take
\beq{accPotential} \notag
  V(q) = -k \cos(2\pi q),
\eeq
so both autonomous limits are equivalent to the pendulum, with the resonance centered around $p=0$ for $t \leq 0$ and $p=\omega$ for $t \geq \tau$.  Without loss of generality we can scale variables to set $\omega=1$, leaving two parameters: the transition time $\tau$ and the potential energy amplitude $k$. All examples and parameter values below correspond to this scaled system.  
 
Contours of  $H_P$ and $H_F$ are shown in \Fig{accContours}, and since $\omega = 1$, the transition $H_P \to H_F$ corresponds to a unit vertical translation of the past vector field. For each $s$, the frozen-time Hamiltonian has saddles at $(q,p) = (\pm\frac12,s)$ with stable and unstable manifolds 
\beq{accSeparatrix}
  p_\pm(q,s) = \pm \sqrt{2k(1+\cos(2\pi q))} + s,
\eeq
that define separatrices bounding a resonance. Of course, by periodicity the two saddles can be identified, so that the manifolds actually correspond to homoclinic loops on the cylinder $M = \bS \times \bR$.  The past resonance corresponds to $s=0$ and its separatrices are denoted $\cU_\pm$ as these are slices of the unstable manifolds of the saddle for the full vector field when $t\le0$. The future resonance corresponds to $s=1$ and its separatrices are denoted $\cS_\pm$, as these are slices of stable manifolds of the saddle for the full vector field when $t \ge \tau$. 
The width of each resonance of the frozen time system is 
\beq{resWidth}
	w = p_+(0,s)-p_-(0,s) = 4\sqrt{k},
\eeq
which for this simple model is independent of $s$. Note that the past and future resonance zones ``overlap" when $w >1$, implying that $k > \frac{1}{16}$. 

\InsertFigTwo{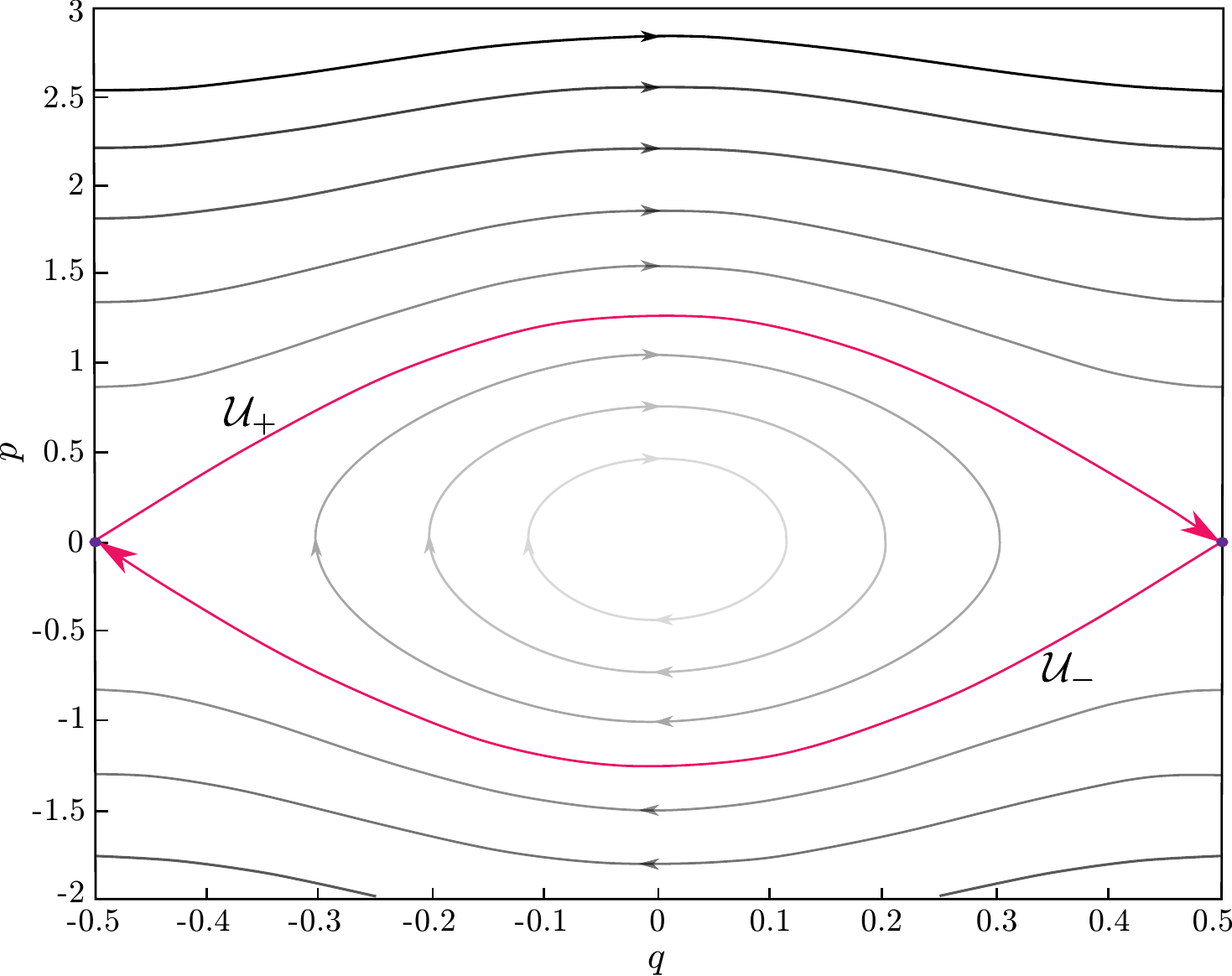}{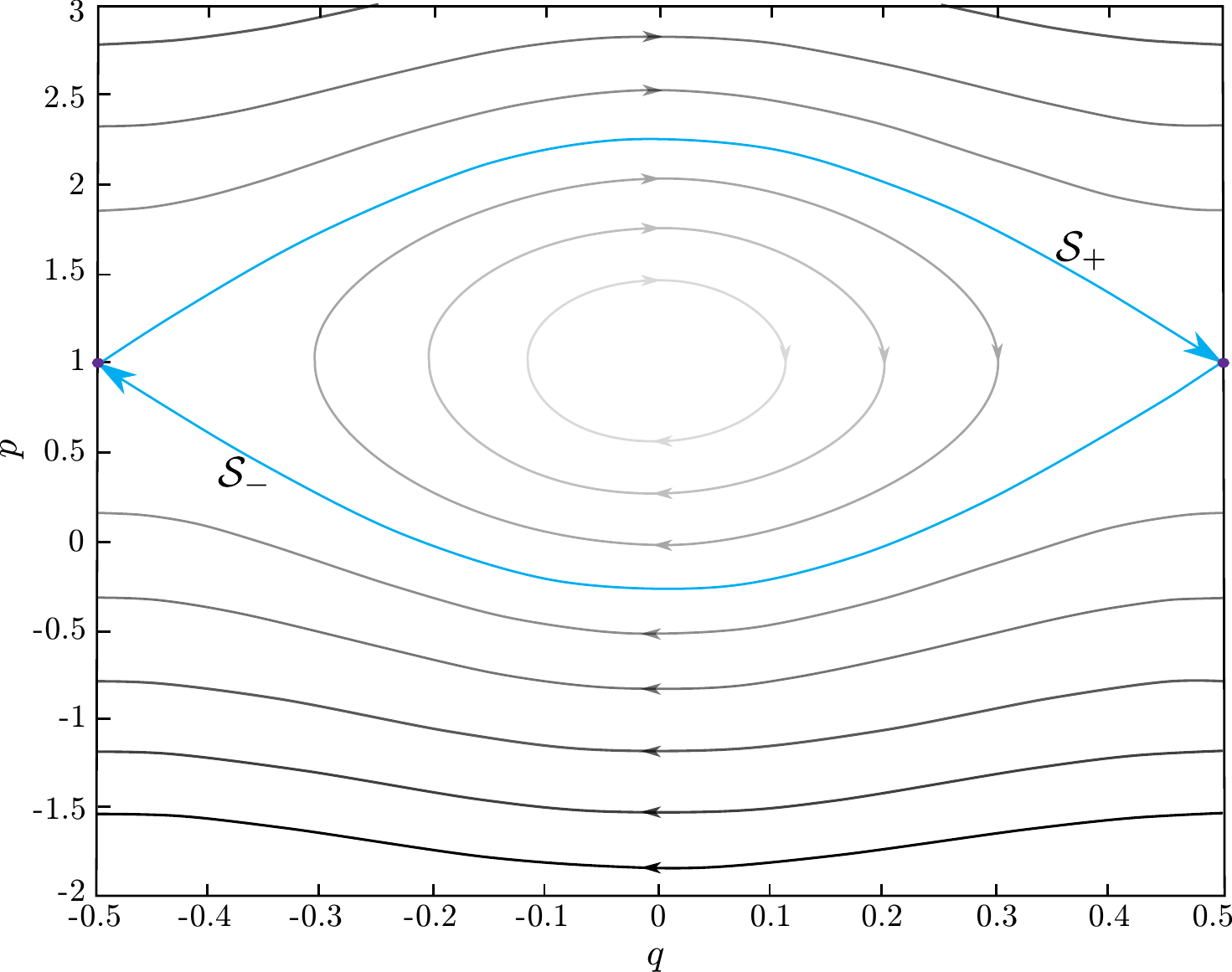}{Contours of the past (left) and future (right) Hamiltonians \Eq{accAutonHam} for $k=0.4$. The unstable manifolds for the saddles of the past vector field are shown in red and the stable manifolds for the saddles of the future vector field are shown in blue.}{accContours}{0.475\linewidth}

Let $A_{io}$ be the area of the region that begins \emph{inside} the past resonance at time $t=0$ and ends \emph{outside} the future resonance at time $t=\tau$, with corresponding notations $A_{ii}$, $A_{oi}$ and $A_{oo}$ for the other beginning and ending configurations.  We are principally interested in calculating the fraction of accelerated phase space area,
\beq{proportion} \notag
  R_{acc} = \frac{A_{ii}}{A_{ii} + A_{io}}.
\eeq
The images of the manifolds $\cU_\pm$ under the transition map for two values of $\tau$ are shown in \Fig{acc}. Here $A_{io}$ is the area of the region that is inside $T(\cU_\pm)$ and outside $\cS_\pm$. This region, light blue in the figure, appears disconnected; however, on the cylinder it is formed from a single connected set. The region $A_{ii}$ is dark blue in the figure and corresponds to particles that remain trapped within the resonance for $t \geq \tau$.

\InsertFigTwo{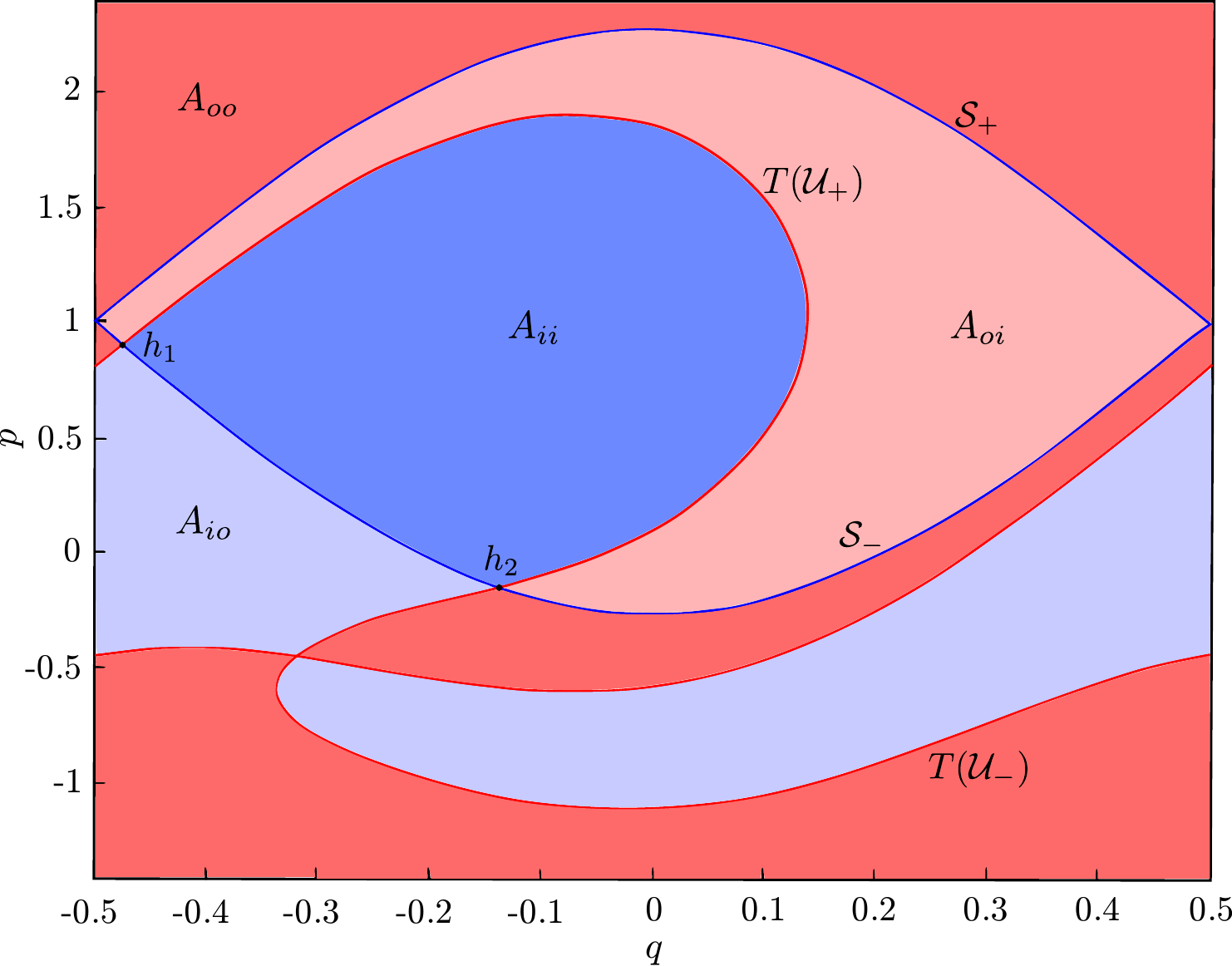}{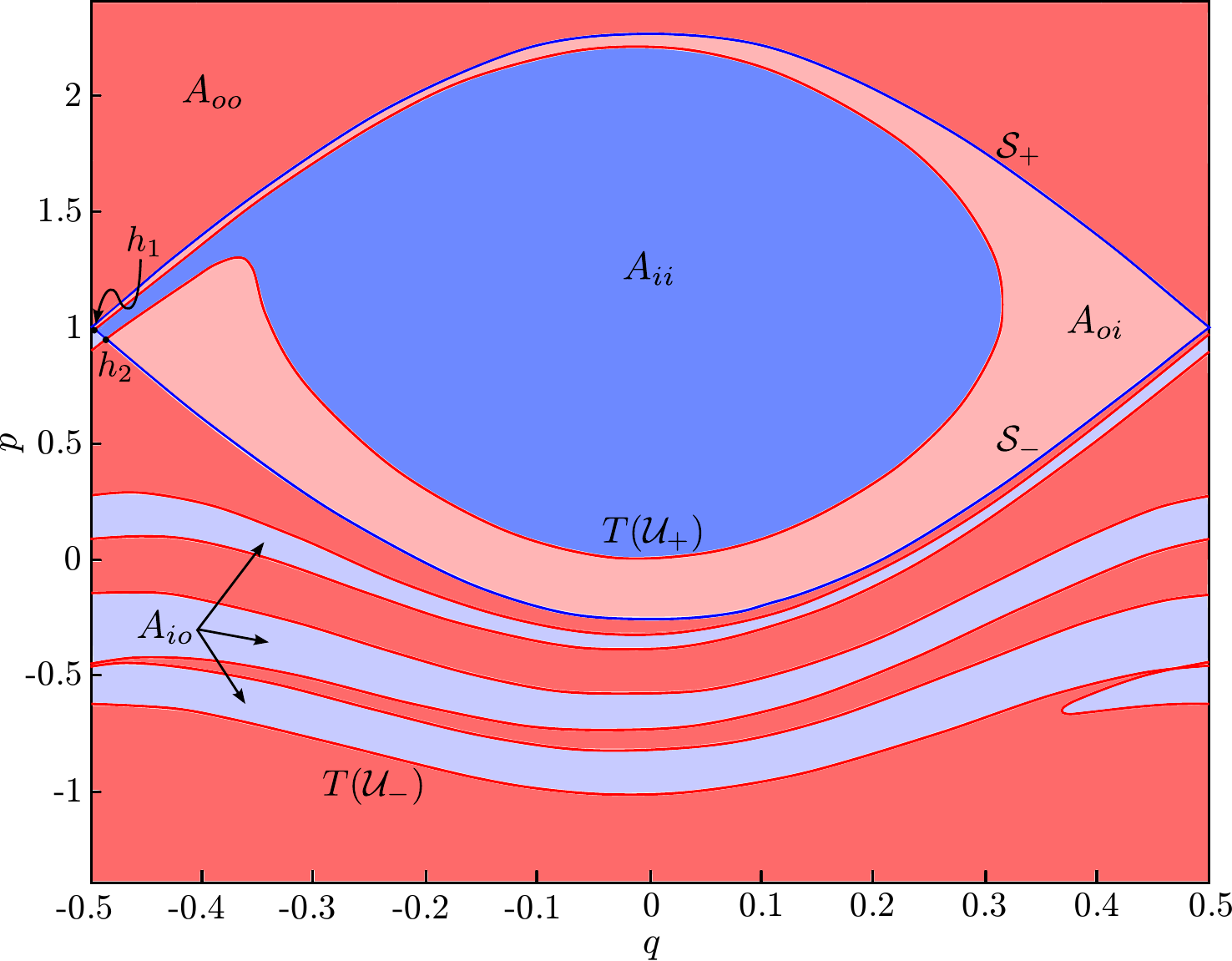}{Image of the unstable manifolds of the past resonance for \Eq{accHamiltonian} at time $\tau$ for $k = 0.4$ with transition times $\tau = 1.0$ (left) and $\tau = 3.0$ (right). The blue regions correspond to trajectories that begin inside the past resonance and the red regions to those that begin outside. The dark blue region, labeled $A_{ii}$ corresponds to those particles that remain trapped in the accelerated potential well and the light blue region, $A_{io}$ to those that are left behind.}{acc}{0.475\linewidth}

In \Fig{acc} there are two heteroclinic points, $\{h_1,h_2\} \in T(\cU_+) \cap \cS_-$; however, such heteroclinic orbits do not always exist. In particular, for $\tau = 0$ the transition map is the identity and so there are heteroclinic points only when the past and future resonances overlap, that is, when $k \geq k_{crit}(0) = \frac{1}{16}$. More generally, the critical $k$ for a heteroclinic bifurcation can be computed as for the double gyre model \Eq{DoubleGyre}. The resulting curve of bifurcations, $k_{crit}(\tau)$, is shown in \Fig{kCrit} and a phase portrait at the bifurcation point for $t = \tau = 1$ is shown in the right pane of this figure. When $k > k_{crit}(\tau)$, there is a pair of heteroclinic orbits, and---unlike the double gyre---there appear to be no additional heteroclinic bifurcations as $\tau$ grows. Given the orbits of $h_1$ and $h_2$, the area $A_{ii}$ can be computed according to the simplified lobe area formula \Eq{transitoryArea}, since the system is indeed transitory. The resulting ratio $R_{acc}$ as a function of both $k$ and $\tau$ is shown in \Fig{accPercentFlux}.  

It is interesting that we never observe intersections between $T(\cU_\pm)$ and $\cS_+$. Indeed, it is easy to see that there is no flux out the top of the instantaneous separatrix. To see this, let 
\[
  E_{sep}(s) = k-\tfrac12 s^2,
\]
be the energy of the frozen separatrix and define
\beq{signFunction}
  F(q,p,s) = H(q,p,s) - E_{sep}(s) .
\eeq
Note that $F < 0$ inside the separatrix and $F > 0$ outside of it. 
Differentiation along the Hamiltonian vector field gives
\[
  \dot{F} = -\dot{s}(p-s).
\]
Since $p \geq s$ for any point on the separatrix $p_+(q,s)$ and since we assume $\dot s(t) \geq 0$, then $\dot F \le 0$ on the upper separatrix. Thus the vector field never permits a trajectory to cross this separatrix from below, and a transverse intersection of $T(\cU_\pm)$ with $\cS_+$ is forbidden.

\InsertFig{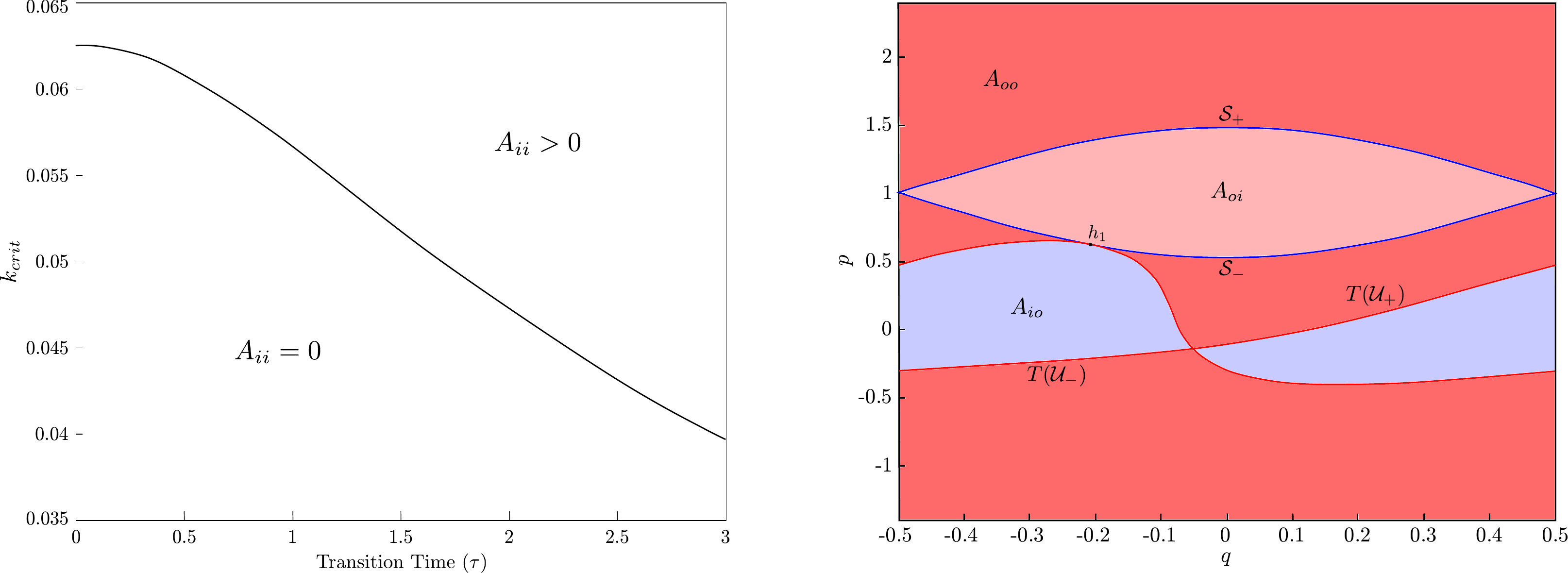}{Curve of heteroclinic bifurcations for \Eq{accHamiltonian} (left pane) and the bifurcation point for $t = \tau = 1.0$ with $k = k_{crit} \approx 0.057$ (right pane).}{kCrit}{\linewidth}

\InsertFig{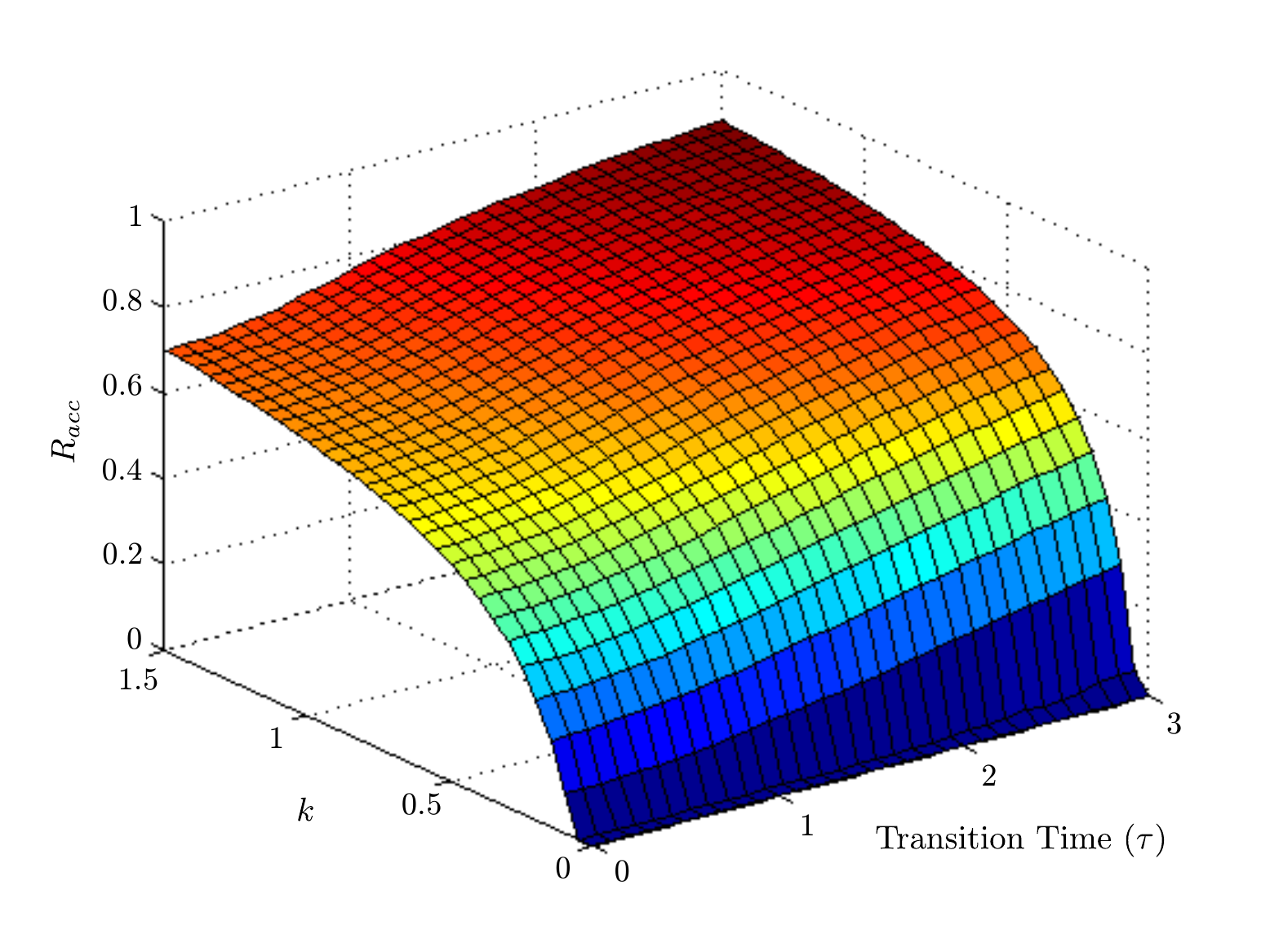}{Fraction of the past resonance that is trapped in the future resonance for \Eq{accHamiltonian} as a function of parameters $k$ and $\tau$.}{accPercentFlux}{3.5in}

The unstable manifold in the right pane of \Fig{acc} can be compared with the ridges of the backward-time FTLE field in \Fig{ftle_acc}. As in the double-gyre example, the qualitative agreement is quite good, even though secondary ridges do exist. A simple threshold filtering of the FTLE field, shown in the right pane of the figure, extracts the most prominent ridge; however, this set is not a curve on the scale of the computational grid, especially in the interior and near the boundary of the future resonance. Once again, the secondary ridges are so similar in height to the main ridge that they can not be removed by a height filter.  This ambiguity as to the true location of the unstable manifold makes it hard to identify the heteroclinic points and accurately compute the desired flux from the manifolds obtained from the FTLE field.

\InsertFig{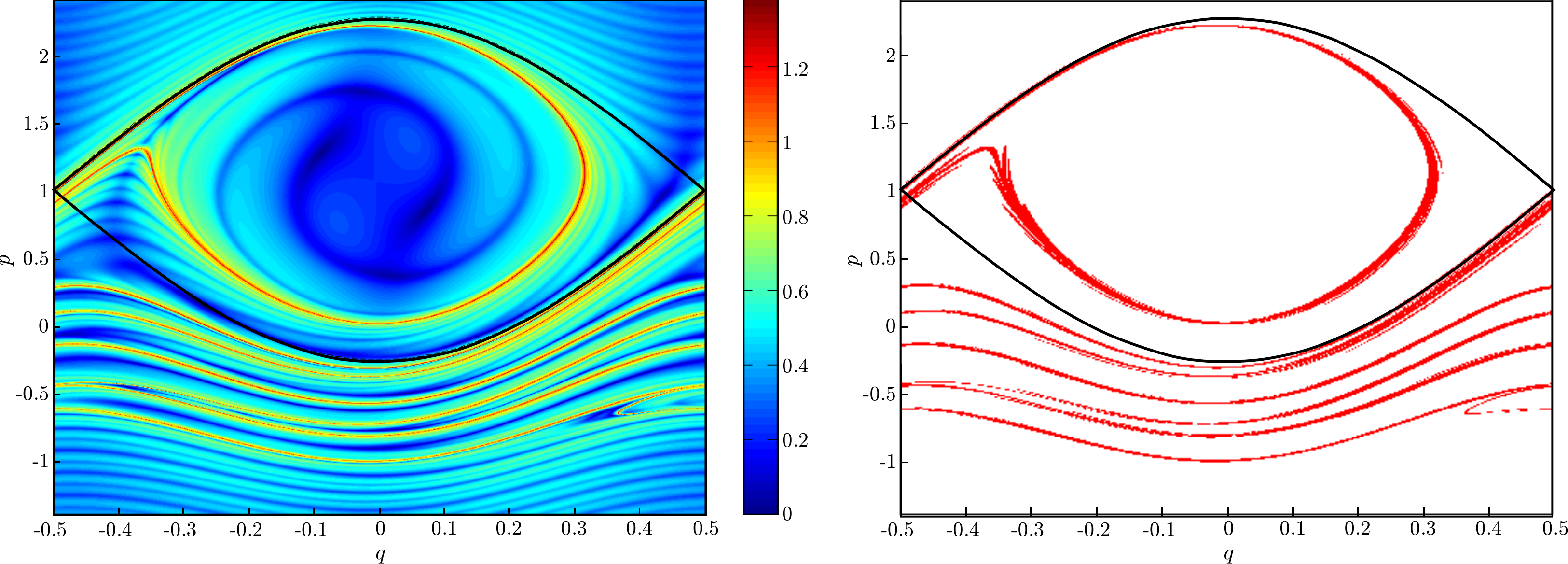}{ (left) Backward-time FTLE field for \Eq{accHamiltonian} with $k=0.4$, $\tau = 3.0$ and integration time $t=-6$. The black line denotes the resonance of the future vector field $F$.  (right) The main ridge extracted from FTLE field by keeping only those values within 45\% of the maximum.}{ftle_acc}{\linewidth}

Adiabatic theory applies to the resonant accelerator system once it is written in the form \Eq{Adiabatic}. There are two topologically distinct loops in the frozen-time phase space corresponding to trapped (oscillatory) and to untrapped (rotational) trajectories. For the trapped case the loop action is the area enclosed, but for the untrapped orbits the loop action is the area contained between the graph of the trajectory and the $q$-axis.
Since $k$ is constant in our model,  the area of the trapped region of the frozen system is independent of $s$, and thus for each trapped orbit of $P$, there is a family, $\gamma_s$, of periodic orbits of the frozen systems with the same action.  Each trapped orbit that is a bounded distance inside the separatrices \Eq{accSeparatrix} has a period that is bounded; consequently, these orbits will be adiabatic in the limit $\tau \to \infty$. For the untrapped orbits, this is no longer always true. As $s$ varies, the upper separatrix of the frozen system ``sweeps through" the region bounded by the separatrix $\cU_+$, with action $J(\cU_+) = \frac{2}{\pi}\sqrt{2k}$, and the separatrix $\cS_+$, with action $J(\cS_+) = J(\cU_+)+1$. Every rotational family of orbits of the frozen system within this \emph{separatrix-swept region}, namely those with actions
\[
	J \in [J(\cU_+), J(\cS_+)],
\]
necessarily crosses the separatrix $p_+(q,s)$ for some $s \in [0,1]$. Since this implies the period is unbounded, adiabatic theory does not apply to these orbits. 

By contrast the rotational orbits with $J < J(\cU_-)$ or $J > J(\cS_+)$ remain bounded away from the frozen separatrices for all $s$, and thus are adiabatic in the limit $\tau \to \infty$. This is supported by the computations in \Fig{accAdiabatic} and the corresponding movie linked in its caption. The left pane of the figure shows eight orbits of $P$, and the middle and right panes show the images of these under $T$ for two values of $\tau$. When $\tau = 10$, each of the orbits---except for the green orbit in the separatrix swept region---has an image under $T$ that is very close to an orbit of $F$ with the same loop action.

\InsertFig{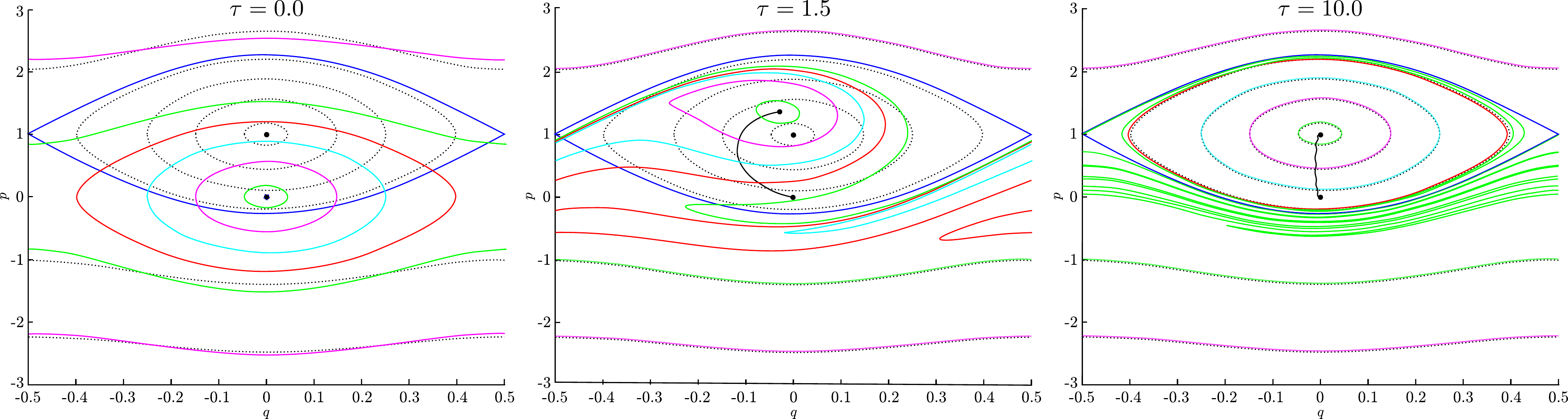}{Illustration of adiabatic invariance of the loop actions of periodic orbits for \Eq{accHamiltonian}.  The left pane shows eight orbits $\gamma^i_0$ of the past vector field, and the middle and right panes depict $T(\gamma^i_0)$ for $\tau = 1.5$ and $10$.  The blue curve is the boundary $\cS_\pm$ of the future resonance and the dotted curves indicate orbits of $F$ with the same actions as the orbits $\gamma^i_0$. The black curve shows the orbit of the elliptic equilibrium of $P$. A movie showing the variation of these images with $\tau$ is at [LINK MOVIE ``AccAdiabatic.mov" HERE]}{accAdiabatic}{\linewidth}

The frozen-time accelerator model has a family of hyperbolic saddles, and each saddle has a pair of homoclinic loops. Thus, the theory of \cite{Kaper91} described in \Sec{Adiabatic} applies. Since $J_{max} = J_{min}$ for this model, the area of the lobe, $A_{io}$, limits to zero, and so this theory predicts that
\[
  \lim_{\tau \to \infty} R_{acc} = 1.
\]
That is to say, the region of area $A_{ii}$ limits to the future resonance, as \Fig{acc} and the accompanying movie seem to suggest.  In terms of the physical model this implies that, provided the acceleration is ``slow enough,'' almost all particles beginning in the resonance at $t=0$ will be stably accelerated.


\section{Conclusions}
While techniques involving finite-time Lyapunov exponents and distinguished hyperbolic trajectories have recently been developed for the identification and extraction of coherent structures in time-dependent systems, they have been used only selectively to give quantitative descriptions of the finite-time flux between such structures.  One reason for this is that the  ``ridges'' of the FTLE field that represent approximate invariant manifolds are often difficult to extract, making precise measurements of flux challenging.  

Here we have considered a special class of two-dimensional nonautonomous systems that exhibit time-dependent behavior only on a compact interval, and have extensively used the concepts of backward and forward hyperbolicity for these transitory systems. The special structure of these systems, leads to a simple a method for the numerical computation of flux between Lagrangian coherent structures in the Hamiltonian case.  Our method relies primarily on knowledge of heteroclinic orbits and their associated invariant manifolds that bound lobes within the extended phase space.  Thus, our computations of flux require very little Lagrangian information relative to computations involving FTLE or distinguished hyperbolic trajectories.  In particular, our adaptive computation of $T(U)$ allowed for an order of magnitude reduction in the number of particle advections required for a computation of the FTLE field at similar resolution.

An important extension to the theory presented here arises in light of recent advances in the theory of finite-time manifolds \cite{Haller98, Sandstede00, Duc08, Yagasaki08}.  These studies have shown that, given a system whose behavior is unknown outside an interval $\cI = [t_-, t_+]$, the manifold structure of a ``sufficiently slowly'' moving orbit at some time $t \in \textrm{int}(\cI)$ is ``unique'' up to an exponentially small correction term, provided $t$ is ``sufficiently far'' from the endpoints of $\cI$.  Since the formulas \Eq{DeltaAMinus} and \Eq{DeltaAPlus} depend only on heteroclinic orbits lying on the boundary of a lobe, they could be used directly to provide exponentially accurate approximations of lobe areas in such systems.

It is not obvious if our technique can be applied more generally to nonautonomous systems or to systems defined by a discrete set of data; however, there are several logical extensions that we plan to address in future work. The first is the case of transitory, symplectic maps, where the action formulas that we have developed should also apply. Similarly, since action formulas for flux have been recently developed in the $n$-dimensional volume preserving case \cite{Lomeli09b}, transitory volume-preserving systems could also be treated. Finally, it is reasonable that if the time dependence is ``episodic" in nature, each transition could at least approximately be treated by the same methods that we have used.\\

\bigskip
\appendix
\begin{center}
	{\bf Appendices}
\end{center}

\section{Transition functions}\label{app:transitionFunctions}

We can obtain a $C^k$ transition function that is polynomial on $t \in [0,1]$, by requiring
$s(0)=0$, $s(1) = 1$ and $D^j s(0) = D^j s(1) = 0$ for $j=1\ldots k$.
On the interval $[0,1]$ these functions are
\bsplit{polynomial}
	s_0(t) &= t \\
	s_1(t) &= t^2(3-2t) \\
	s_2(t) &= t^3( 10-15t+6t^2) \\ 
	s_3(t) &= t^4(35-84t+70t^2-20t^3) \\
	s_4(t) &= t^5(126-420t+540t^2-315t^3+70t^4).
\end{split}\eeq
Figure \ref{fig:transitionFunctions} shows several of these functions for various values of $k$.  It is not hard to see that in general these polynomials are given by
\[
	s_{k}(t) = \frac{\Gamma(2k)}{(\Gamma(k))^2} \int_0^t s^k(1-s)^k \,ds ,
\]
for $t \in [0,1]$, which is monotone.

\InsertFig{transitionFunctions}{The transition functions $s_k(t)$ for odd $k$ up to $9$.}{transitionFunctions}{2.5in}

\section{Forms and Lie Derivatives}\label{app:notation}
Here we set out our notation, which follows \cite{Abraham94}. We denote the set of $k$-forms on a manifold $M$ by $\Lambda^k(M)$, and the set of vector fields by $\cV(M)$. If $\alpha \in \Lambda^k(M)$ and $V_1, V_2, \ldots V_k \in \cV(M)$, then the pullback, $f^*$, of a form $\alpha$ by a diffeomorphism $f$ is defined by
\beq{pullbackForm}
	(f^*\alpha)_x(V_1,V_2,...,V_k) =\alpha_{f(x)}(Df(x)V_1(x),\ldots, Df(x)V_k(x)) .
\eeq
The pullback can be applied to a vector field $V$ as well:
\beq{pullbackV}
	(f^{*}V)(x) = (Df(x))^{-1}V(f(x)) .
\eeq
The push-forward operator is defined as
\beq{pushforwardV}
	f_* = (f^{-1})^* .
\eeq
The interior product of $\alpha$ with $V$ is defined
as the $(k-1)$-form
\beq{innerProduct}
	\imath_V \alpha \equiv \alpha(V,\cdot,\ldots,\cdot) .
\eeq
Suppose that $\vphi_{t,t_0}$ is the ($C^1$) flow of a vector field $V(x,t)$, so that $\vphi_{t_0,t_0}(x) = x$,
and $\frac{d}{dt} \vphi_{t,t_0}(x) = V(\vphi_{t,t_0}(x),t)$.
Then the Lie derivative
with respect to $V$ is the linear operator defined by
\beq{LieDerivative}
	\cL_V \cdot \equiv  \left. \frac{d}{ds}\right|_{s=t}  \vphi_{s,t}^* \cdot
\eeq
where $\cdot$ is any tensor. In particular for a vector field $X$,
\beq{LieBracket}
	\cL_V X = [V,X] = (V \cdot \nabla) X - (X \cdot \nabla) V ,
\eeq
where $[ \;,\; ]$ is the Lie bracket. The Lie derivative acting on
differential forms obeys Cartan's homotopy formula
\beq{Cartan}
	\cL_V \alpha \equiv \imath_V ( d \alpha) + d(\imath_V \alpha) .
\eeq
Note that $\cL$ behaves ``naturally" with respect to the pullback:
\beq{naturally}
	f^*\cL_V \alpha = \cL_{f^*V} f^*\alpha .
\eeq

\bigskip
\bibliographystyle{siam}
\bibliography{transitory}

\end{document}